\numberwithin{equation}{section}
\def\C{\ensuremath{\mathbb C}}
\newcommand{\be}{\begin{equation}}
\newcommand{\ee}{\end{equation}}
\newcommand{\bea}{\begin{eqnarray}}
\newcommand{\eea}{\end{eqnarray}}
\newcommand{\order}[1]{\ensuremath{{\mathcal O}\left(#1\right)}}
\newcommand{\bt}{{\mathbf t}}
\newcommand{\bx}{{\mathbf x}}
\newcommand{\Span}{\operatorname{Span}}
\newcommand{\M}{\mathcal{M}}
\newcommand{\e}{\epsilon}
\newcommand{\p}{\partial}
\newcommand{\nn}{\nonumber}
\newcommand{\bgT}{{\mathbf T}}
\newcommand{\btheta}{{\boldsymbol \theta}}
\newtheorem{theorem}{Theorem}[section]
\newtheorem{definition}{Definition}[section]
\newtheorem{proposition}{Proposition}[section]
\newtheorem{lemma}{Lemma}[section]
\newtheorem{remark}{Remark}[section]
\newtheorem{example}{Example}[section]
\newenvironment{prf}{\noindent {\it Proof} \ }{\hfill $\Box$}
\begin{document}
\title{Geometric interpretation of Zhou's explicit formula for the Witten--Kontsevich tau function}
\author{Ferenc Balogh\footnote{fbalogh@sissa.it},~ Di Yang\footnote{dyang@sissa.it}\\
{\small SISSA, via Bonomea 265, Trieste 34136, Italy}}
\date{}
\maketitle
\begin{abstract}
Based on the work of Itzykson and Zuber on Kontsevich's integrals, we give a geometric interpretation and a simple proof of Zhou's explicit formula for the Witten-Kontsevich tau function.  More precisely, we show that the numbers $A_{m,n}^{Zhou}$ defined by Zhou coincide with the {\em affine coordinates} for the point of the Sato Grassmannian corresponding to the Witten-Kontsevich tau function.  Generating functions and new recursion relations for $A_{m,n}^{Zhou}$ are derived. Our formulation on matrix-valued affine coordinates and on tau functions remains valid for generic Grassmannian solutions of the KdV hierarchy. 

A by-product of our study indicates an interesting relation between the {\em matrix-valued} affine coordinates for the Witten-Kontsevich tau function and the $V$-matrices associated to the $R$-matrix of Witten's $3$-spin structures.
\end{abstract}
\noindent{\bf Keywords.} Witten-Kontsevich tau function;  Sato Grassmannian; matrix-valued affine coordinates; $R$-matrix.

\section{Introduction}
The intersection theory of tautological classes over  the Deligne--Mumford moduli spaces  $\overline{\M}_{g,n}$ is an important subject in algebraic geometry and string theory. 

Recall that $\overline{\M}_{g,n}$ are moduli spaces of stable curves of genus $g$ with $n$ marked points. 
Denote by $\mathcal{L}_i$ the $i_{\text{th}}$ tautological line bundle over $\overline{\M}_{g,n},\,i=1,...,n$. 
The intersection numbers on $\overline{\M}_{g,n}$ are certain rational numbers defined by
\be \langle \tau_{k_1}...\tau_{k_n}\rangle_g := \int_{\overline{\M}_{g,n}} \psi_1^{k_1}...\psi_n^{k_n}\ee
where $k_1,...,k_n$ are non-negative integers and $\psi_i$ are first Chern classes of the line bundles $\mathcal{L}_i$
\be \psi_i:=c_1(\mathcal{L}_i).\ee
For example, $\langle \tau_0^3\rangle_0 =1,\,\langle \tau_1\rangle_1=\frac{1}{24}.$ Due to dimension counting, $\langle \tau_{k_1}...\tau_{k_n}\rangle_g$ gives zero unless
\be \label{dim}
k_1+...+k_n=3g-3+n.
\ee
The generating function of these numbers, also called the partition function of 2D quantum gravity, is the following formal series 
\be Z(\bt; \e)=\exp\left(\sum_{g=0}^\infty \e^{2g-2} \sum_{n=0}^\infty \sum_{k_1\geq0,...,k_n\geq 0} \frac{1}{n!} \langle \tau_{k_1}\cdots\tau_{k_n}\rangle_g\,  t_{k_1}\cdots t_{k_n}\right).\ee
Here $\bt=(t_0,t_1,t_2,\dots)$ are called coupling constants and $\e$ is the string coupling constant.

In 1991 E. Witten \cite{Witten} proposed his famous conjecture, which opened a new direction in the studies of the intersection theory, namely the relations between generating functions of intersection numbers and integrable hierarchies.

\paragraph{Witten's conjecture \cite{Witten}.}\ 
The function $u(\bt;\e)$ defined by
\be
u(\bt;\e):=\e^2 \p_x^2 \log Z(\bt;\e)
\ee
satisfies the Korteweg-de Vries (KdV) hierarchy:
\bea
u_{t_1}&=&u u_x+\frac{\e^2}{12}u_{xxx},\label{KdV-hie1}\\
u_{t_p}&=&\frac{1}{1+2p}\mathcal{R}\frac{\p u}{\p t_{p-1}},\quad p\geq 2,\label{KdV-hie2}
\eea
where $\mathcal{R}=2u+u_x\p_x^{-1}+\frac{\e^2}{4}\p^2$ is the Lenard-Magri recursion operator and $x$ is identified with $t_0$. Note that the solution $u(\bt;\e)$ is uniquely determined in the formal power series ring by the initial condition
\be u(t_0=x,0,0,...;\e)=x.\ee 
It was proved by Witten that the partition function $Z(\bt;\e)$ satisfies the string equation
\be
\sum_{p\geq 1} t_p \frac{\p Z}{\p t_{p-1}}+\frac{t_0^2}{2\e^2} Z=\frac{\p Z}{\p t_0}.
\ee
Together with the KdV hierarchy and the dimension counting \eqref{dim}, the string equation determines $Z(\bt;\e)$ uniquely.

Witten's conjecture was proved by Kontsevich \cite{Kontsevich}, and the partition function $Z(\bt;\e)$ is now famously known as the {\em Witten-Kontsevich  tau function}.  Below, for simplicity, we always take 
\be
\epsilon=1
\ee
and denote $Z(\bt)=Z(\bt;\e=1).$

One of the main tools in studying the KdV hierarchy is Sato's infinite dimensional Grassmannian \cite{Sato}. Kac and Schwarz \cite{KS} characterized the point of the Sato Grassmannian corresponding to the Witten-Kontsevich tau function as the linear subspace of $\mathbb{C}[[\lambda^{-1}]]$ given by
\be
W^{WK}=\Span_\mathbb{C} \{c(\lambda),\,S_\lambda c(\lambda), \, S_\lambda^2 c(\lambda),...\},
\ee
where $S_\lambda$ is the differential operator
\be
S_\lambda=\frac{1}{\lambda}\p_\lambda-\frac{1}{2\,\lambda^2}-\lambda,
\ee
and $c(\lambda)$ is the unique formal solution to the ODE problem
\bea
&&\left(S_\lambda^2-\lambda^2\right)c(\lambda)=0,\\
&&c(\lambda)=1+\mathcal{O}(\lambda^{-1}),\quad \lambda\rightarrow \infty.
\eea
The explicit expression for $c(\lambda)$ is given by
\be\label{def-c}
c(\lambda)=\sum_{k=0}^\infty c_k \,\lambda^{-3k}=1+\sum_{k=0}^\infty  \frac{(-1)^k}{288^k}\frac{(6k)!}{(3k)!(2k)!}\lambda^{-3k}.
\ee
Denote by $q(\lambda)$ the following formal series
\be q(\lambda)=-\frac{1}{\lambda}\,S_\lambda \,c(\lambda), \ee
which has the form
\be\label{def-q}
q(\lambda)=\sum_{k=0}^\infty q_k \,\lambda^{-3k}=1+\sum_{k=0}^\infty \frac{1+6k}{1-6k} \frac{(-1)^k}{288^k}\frac{(6k)!}{(3k)!(2k)!}\lambda^{-3k}.
\ee
It is easy to check that
\bea
&& W^{WK} = \Span_\mathbb{C}\left\{\lambda^{2k}c(\lambda), \lambda^{2k+1}q(\lambda)\right\}_{k=0}^{\infty},\\
&& \lambda^2\,W^{WK}\subset\,W^{WK}.
\eea
In principle, one can reconstruct from the data $\{c_k,\,q_k\}_{k\geq 0}$ the corresponding tau function $Z(\bt;\e)$; this was made explicit by Itzykson and Zuber in \cite{Itzykson_Zuber}.

Recently, by solving a set of linear constraints  (the Virasoro constraints) on $Z(\bt)$ in the Fermionic Fock space, Zhou \cite{Zhou} derived an alternative explicit formula for the Witten-Kontsevich tau function of the form
\begin{equation}
Z(\bt)=\sum_{\mu\in\mathbb{Y}} A_\mu^{Zhou} \cdot s_\mu(\bgT),
\end{equation}
where the summation is over all Young diagrams, $s_\mu(\bgT)$ is the Schur polynomial indexed by a partition $\mu$, and 
\be \bgT=(T_1,T_2,T_3,...)\ee are customarily called ``times" of the flows of the KP hierarchy, which satisfy
\be t_k=(-1)^k \cdot \sqrt{-2}\,T_{2k+1} \prod_{j=0}^k \left(j+\frac12\right).\ee
In the Frobenius notation $\mu=(m_1,...,m_k\ |\ n_1,...,n_k)$ \cite{Macdonald}, the coefficient $A_\mu^{Zhou}$ is defined by
\bea A_\mu^{Zhou}=(-1)^{n_1+...+n_k}\det_{1\leq i,j\leq k} (A_{m_i,\,n_j}^{Zhou})\eea
where $A_{i,j}^{Zhou}$ are complex numbers given by the following expressions
\begin{align}
A_{3m-1,3n}^{Zhou}=&\,A_{3m-3,3n+2}^{Zhou}=(-1)^n \left(-\frac{\sqrt{-2}}{144}\right)^{m+n}\frac{(6m+1)!!}{(2(m+n))!} \nn\\
&\cdot \prod_{j=0}^{n-1}(m+j)\cdot \prod_{j=1}^n(2m+2j-1)\cdot \left(B_n(m)+\frac{b_n}{6m+1}\right),\\
A_{3m-2,3n+1}^{Zhou}=&\,(-1)^{n+1} \left(-\frac{\sqrt{-2}}{144}\right)^{m+n}\frac{(6m+1)!!}{(2(m+n))!} \nn\\
&\cdot \prod_{j=0}^{n-1}(m+j)\cdot \prod_{j=1}^n(2m+2j-1)\cdot \left(B_n(m)+\frac{b_n}{6m-1}\right)
\end{align}
where $m\geq 1,n\geq 0,$ and $B_n(m)$ is a polynomial in $m$ of degree $n-1$ defined by
\be
B_n(x)=\frac{1}{6}\sum_{j=1}^n 108^j b_{n-j} \cdot (x+n)_{[j-1]},\quad x \in \C
\ee
with
\bea
&&(y)_{[j]}:=\frac{\Gamma(y+1)}{\Gamma(y-j+1)},\quad y\in\mathbb{C},\\
&&b_k:=\frac{2^k\cdot (6k+1)!!}{(2k)!},\quad k\in \mathbb{Z}_{\geq 0}.
\eea
\begin{theorem}\label{main}
The coefficients $A_{m,n}^{Zhou},\,m,n\geq 0$ are the {\em affine coordinates} of the point of the Sato Grassmannian corresponding to the Witten-Kontsevich tau function. Moreover, they satisfy the following properties:
\begin{itemize}
\item[(i)] Two-step recursion relation:
\be
A_{m+2,n}^{Zhou}-A_{m,n+2}^{Zhou}=A_{m,0}^{Zhou}A_{1,n}^{Zhou}+A_{m,1}^{Zhou}A_{0,n}^{Zhou},\quad \forall\,m,n\geq 0.
\ee
\item[(ii)] Symmetry:
\be
A_{n,m}^{Zhou}=(-1)^{m+n} A_{m,n}^{Zhou} ,\quad \forall\,m,n\geq 0.
\ee
\item[(iii)] Generating formula: Rescale\footnote{The rescaling is due to a different choice of the flow normalisations where our choice is more natural and has the advantage that all coefficients are rational; see \eqref{normalisation-choice}.}
\be B_{m,n}=(\sqrt{-2})^{m+n+1}A^{Zhou}_{m,n},\quad m,n\geq 0,\ee
and define $2\times 2$ matrices $Z_{m,n}$ by
\be\label{Z-KdV}
Z_{m,n}=\left(\begin{array}{cc}B_{2m+1,2n}& B_{2m+1,2n+1} \\ B_{2m,2m} & B_{2m,2n+1} \\ \end{array}\right);
\ee
Then we have
\be
\sum_{k,l=0}^\infty Z_{k,l}\,\alpha^{-k-1}\beta^{-l-1}=\frac{I-G(\alpha)G(\beta)^{-1}}{\alpha-\beta}
\ee
where $G(\lambda)$ is a matrix-valued formal series  defined by
\be\label{gkdv}
G(\lambda)=\left(\begin{array}{ll}
\sum_k c_{2k}\lambda^{-3k} &  \sum_k q_{2k+1} \lambda^{-3k-1}\\
\sum_k c_{2k+1}\lambda^{-3k-2} & \sum_k q_{2k} \lambda^{-3k}\\
\end{array}\right).
\ee
Here $\sum_k:=\sum_{k=0}^\infty$ and we recall that for $k\geq 0,$
\be c_k=\frac{(-1)^k}{288^k}\frac{(6k)!}{(3k)!(2k)!},\qquad q_k= \frac{1+6k}{1-6k}c_k. \ee
\end{itemize}
\end{theorem}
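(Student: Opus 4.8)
The plan is to deduce the entire statement from the standard dictionary between a point of the Sato Grassmannian and the Schur expansion of its tau function, so that the closed form of $A_{m,n}^{Zhou}$ never has to be manipulated directly. Recall that a point $W$ in the big cell has a unique \emph{normalized basis} $v_m(\lambda)=\lambda^m-\sum_{n\geq 0}a_{m,n}\lambda^{-n-1}$, $m\geq 0$, whose coefficients $a_{m,n}$ are by definition its affine coordinates, and that the corresponding tau function expands as $\tau_W=\sum_\mu (-1)^{n_1+\cdots+n_k}\det_{1\leq i,j\leq k}(a_{m_i,n_j})\,s_\mu$, the sum running over $\mu=(m_1,\dots,m_k\mid n_1,\dots,n_k)$ in Frobenius notation. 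Zhou's formula has exactly this shape, so the main assertion follows from the linear independence of the Schur polynomials $s_\mu(\bgT)$: comparing the two expansions of $Z$ coefficient by coefficient and reading off the hook partitions $\mu=(m\mid n)$ gives $A_{m,n}^{Zhou}=a_{m,n}$, after which the determinantal coefficients $A_\mu^{Zhou}$ agree automatically. The only point needing care is that Zhou's times $\bgT$ and the sign $(-1)^{n_1+\cdots+n_k}$ are precisely those of the Sato--Schur formula; this is pure bookkeeping, fixed once and for all by the change of variables relating $\bt$ to $\bgT$. This is the ``simple proof'': the substance is the dictionary, not Zhou's explicit expressions.

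With the identification $a_{m,n}=A_{m,n}^{Zhou}$ in hand, part (i) is a direct consequence of the KdV reduction $\lambda^2 W^{WK}\subset W^{WK}$. Since $\lambda^2 v_m$ lies in $W^{WK}$ and has leading term $\lambda^{m+2}$, it must equal $v_{m+2}-a_{m,0}v_1-a_{m,1}v_0$, the lower polynomial terms $\lambda$ and $\lambda^0$ being absorbed by $v_1$ and $v_0$. Demanding that the remaining negative-degree parts cancel produces exactly the bilinear two-step relation $a_{m+2,n}-a_{m,n+2}=a_{m,0}a_{1,n}+a_{m,1}a_{0,n}$, which is (i). The appearance of a \emph{two}-step rather than one-step recursion is precisely the imprint of $\lambda^2$-invariance, i.e. of the KdV reduction of KP.

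Part (iii) carries the real work, and I would obtain it as the specialization of a general generating identity valid for any $\lambda^2$-invariant Grassmannian point. Splitting $\C((\lambda^{-1}))$ into its even and odd parts in $\lambda$ turns such a $W$ into a rank-two submodule over $\C((\lambda^{-2}))$, encoded by a $2\times 2$ matrix wave function whose two columns are the reduced generators; for $W^{WK}$ these generators are built from $c(\lambda)$ and $q(\lambda)$ of \eqref{def-c}--\eqref{def-q}, and reassembling them by the grading adapted to the $\lambda^{-3}$-expansion recovers $G(\lambda)$ of \eqref{gkdv}. The affine coordinates then organize into the $2\times 2$ blocks $Z_{k,l}$, and the claim becomes a matrix Cauchy--kernel identity. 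The natural route is to write $I-G(\alpha)G(\beta)^{-1}=(G(\beta)-G(\alpha))G(\beta)^{-1}$, expand in $\alpha^{-1},\beta^{-1}$, and match the blocks against the normalized basis. I expect the obstacle here to be organizational rather than conceptual: correctly interleaving the even/odd index bookkeeping of the reduction with the entries of $G$, and tracking the normalization $B_{m,n}=(\sqrt{-2})^{\,m+n+1}A_{m,n}^{Zhou}$ so that all powers of $\sqrt{-2}$ and all signs $(-1)^n$ cancel to leave the clean kernel.

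Finally, part (ii) I would read off as self-duality of $W^{WK}$. The sign-symmetry $a_{n,m}=(-1)^{m+n}a_{m,n}$ is equivalent to $W^{WK}$ being isotropic for the twisted residue pairing $\langle f,g\rangle=\res_{\lambda} f(\lambda)\,g(-\lambda)\,d\lambda$: substituting the normalized basis and extracting the coefficient of $\lambda^{-1}$ in $v_m(\lambda)\,v_n(-\lambda)$ gives $(-1)^m a_{n,m}-(-1)^n a_{m,n}$, whose vanishing is exactly (ii). Thus the remaining task is to verify this isotropy directly on the generators $c(\lambda)$ and $q(\lambda)$, where it reduces to a residue identity following from the defining relations $S_\lambda^2 c=\lambda^2 c$ and $q=-\lambda^{-1}S_\lambda c$. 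This makes (ii) a geometric companion of the generating formula (iii) rather than an independent computation, and closes the argument.
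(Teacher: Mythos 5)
Your treatment of parts (i)--(iii) is essentially sound and close to the paper's own machinery: your derivation of (i) from $\lambda^2 W^{WK}\subset W^{WK}$ is the entrywise form of the matrix recursion \eqref{eq:Z_recursion} of Lemma \ref{m-aff-1}; your residue-pairing isotropy for (ii) is equivalent to the condition $\det G(\lambda)\equiv 1$, i.e.\ to the Itzykson--Zuber identity $c(\lambda)q(-\lambda)+c(-\lambda)q(\lambda)\equiv 2$, which is exactly how Lemma \ref{m-symm} obtains the symmetry; and your Cauchy-kernel identity for (iii) is Lemma \ref{m-aff-2}, which the paper deduces in one line from the recursion. The genuine problem lies in your proof of the headline identification, and it is twofold. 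First, it is circular with respect to what the theorem is for: you consume Zhou's expansion $Z=\sum_\mu A_\mu^{Zhou}s_\mu(\bgT)$ as an input, whereas the purpose of the paper (and the joint content of Prop.~\ref{aff-tau} and Prop.~\ref{id-Zhou}) is to \emph{reprove} that formula. The paper never invokes Zhou's theorem: it characterizes the affine coordinates of $W^{WK}$ by the recursion \eqref{eq:Z_recursion} together with the boundary data \eqref{ini-Z} (Lemma \ref{unique-Z}), and then checks, directly from Zhou's closed-form expressions, that the rescaled numbers $B_{m,n}$ satisfy that same recursion and boundary data; the crux is the combinatorial identity \eqref{combinatorics}, verified using the recursion for the polynomials $B_n(x)$, plus the matching of $B_{k,0},B_{k,1}$ against the coefficients $c_k,q_k$ appearing in \eqref{ini-Z}. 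That computation --- the only place where Zhou's explicit expressions are ever used --- is precisely what your plan announces it will avoid, so your proposal does not reroute the paper's argument; it removes its mathematical content.

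Second, even granting Zhou's theorem as a citable fact, your ``linear independence of Schur polynomials'' step fails as stated. $Z$ depends on the odd times only, and the Schur polynomials are \emph{not} linearly independent once restricted to the odd-time subspace: $s_{(2)}=\theta_1^2/2+\theta_2$ and $s_{(1,1)}=\theta_1^2/2-\theta_2$ coincide there, and these are precisely the hooks $(1|0)$ and $(0|1)$. Hence comparing the two expansions of $Z$ ``coefficient by coefficient'' is not licit; at degree two, for instance, equality of the expansions as functions of $\bt$ only yields the combination $A_{1,0}-A_{0,1}$, not $A_{1,0}$ and $A_{0,1}$ separately. To repair this you must show that \emph{both} expansions, viewed as formal series in all KP times, are independent of the even times, so that each is determined by its restriction to the odd-time subspace; on the Sato side this uses the $2$-reduction together with the normalization (in effect $\det G\equiv 1$, equivalently the symmetry of Lemma \ref{m-symm}, which is part (ii) --- one of the statements being proven), and on Zhou's side it requires importing more of his fermionic construction than a bare citation of the formula. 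This is fixable, but it is not the ``pure bookkeeping'' your proposal claims, and without it the identification $A_{m,n}^{Zhou}=A_{m,n}$, and with it parts (i)--(iii) read as statements about Zhou's numbers, remain unproven.
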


\begin{remark} The formulation of matrix-valued affine coordinates and tau functions holds true for generic Sato Grassmannian solutions to the KdV hierarchy. See Lemmae \ref{m-aff-1}, \ref{m-aff-2}, \ref{unique-Z} and Theorem \ref{aff-tau} for details.
\end{remark} 

\begin{theorem}\label{two-affine} The $R$-matrix of the $A_2$ Frobenius manifold can be expressed by
the $G$-matrix of the point of the Sato Grassmannian corresponding to the Witten-Kontsevich tau function in the following way:
\be
R(z) = z^{\frac{\sigma_3}{6}}G\left(z^{-\frac{2}{3}}\right)^{-1}z^{-\frac{\sigma_3}{6}},\qquad \sigma_3 =\begin{pmatrix}1&0\\0& -1\end{pmatrix},
\ee
where $G(\lambda)$ is defined in \eqref{gdef} which coincides with \eqref{gkdv} in the Witten-Kontsevich case, and $R(z)$ is the $R$-matrix in the notation of \cite{PPZ} evaluated at $\phi = 6^{-2/3}.$ Furthermore, the $V$-matrices associated to $R(z)$ can be expressed by matrix-valued affine coordinates: For any $k,l\geq 0,$ 
\begin{align}
&V_{2k,2l+1}=Z_{3k,\,3l+2},\,\quad\quad V_{2l+1,2k}=-Z_{3l+2,3k},\label{a1}\\
&V_{2k,2l}=-Z_{3k,3l+1}-Z_{3k,3l},\,\quad V_{2k+1,2l+1}=Z_{3k+2,3l+2}+Z_{3k+2,3l+1}\label{a2}
\end{align}
where $V_{i,j}$ are defined in \eqref{V-def}, and $Z_{i,j}$ are defined in \eqref{Z-def} which coincide with \eqref{Z-KdV} in the Witten-Kontsevich case.
\end{theorem}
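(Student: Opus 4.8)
The plan is to deduce both assertions from a single fact: the pair $(c(\lambda),q(\lambda))$ that assembles $G(\lambda)$ in \eqref{gkdv} is, after the substitution $\lambda=z^{-2/3}$ and the gauge conjugation by $z^{\sigma_3/6}$, a fundamental solution of the quantum differential equation of the two-dimensional $A_2$ Frobenius manifold, whose normalised formal-series solution is exactly the $R$-matrix $R(z)$ at $\phi=6^{-2/3}$. I would prove the $R$-matrix identity first, and then obtain the $V$-matrix formulas \eqref{a1}--\eqref{a2} by substituting it into the bilinear definition \eqref{V-def} and appealing to the generating formula of Theorem \ref{main}(iii).

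For the first assertion, I would start from the Kac--Schwarz data. Writing $d:=S_\lambda c=-\lambda q$, the two defining relations $(S_\lambda^2-\lambda^2)c=0$ and $q=-\lambda^{-1}S_\lambda c$ turn into the first-order system
\[
\frac{d}{d\lambda}\begin{pmatrix} c \\ d\end{pmatrix}
=\begin{pmatrix} \lambda^2+\tfrac{1}{2\lambda} & \lambda \\ \lambda^3 & \lambda^2+\tfrac{1}{2\lambda}\end{pmatrix}\begin{pmatrix} c \\ d\end{pmatrix}.
\]
The property $\lambda^2W^{WK}\subset W^{WK}$ lets me use $\lambda^2$ as the effective variable and reorganise $c,q$ by the parity of the exponent, which is precisely the block pattern of \eqref{gkdv}; this promotes the above to a matrix ODE for $G(\lambda)$. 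Substituting $\lambda=z^{-2/3}$ and conjugating by $z^{\sigma_3/6}$ eliminates the fractional-power grading and brings the system into the standard $A_2$ $R$-matrix ODE. Since $\tilde R(z):=z^{\sigma_3/6}G(z^{-2/3})^{-1}z^{-\sigma_3/6}=I+\order{z}$ (immediate from $c=q=1+\order{\lambda^{-1}}$ and the off-diagonal decay in \eqref{gkdv}), uniqueness of the normalised formal solution forces $\tilde R(z)=R(z)$, provided I also verify the symplectic normalisation. I expect the latter to reduce to the constancy of a Wronskian-type pairing of $c$ and $q$, which is equivalent to the symmetry $A_{n,m}^{Zhou}=(-1)^{m+n}A_{m,n}^{Zhou}$ of Theorem \ref{main}(ii).

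For the second assertion, I would insert $R(z)^{-1}=z^{\sigma_3/6}G(z^{-2/3})z^{-\sigma_3/6}$ into the bilinear definition \eqref{V-def} of the $V$-matrices. The symplectic property of $G$ established above is exactly what converts the transpose $R(w)^{-\top}$ appearing there into $G(w^{-2/3})^{-1}$, so the numerator becomes a conjugate of $I-G(\alpha)G(\beta)^{-1}$ with $\alpha=z^{-2/3}$, $\beta=w^{-2/3}$. By Theorem \ref{main}(iii) this equals $(\alpha-\beta)\sum_{k,l}Z_{k,l}\,\alpha^{-k-1}\beta^{-l-1}$. Writing $\alpha^{-k-1}=z^{2(k+1)/3}$ sorts the double series by the residues of $k$ and $l$ modulo $3$, while the conjugations $z^{\pm\sigma_3/6}$, $w^{\pm\sigma_3/6}$ redistribute the four scalar entries across the $2\times2$ blocks and supply the signs; matching the coefficient of $z^{2k}w^{2l}$ (and of its parity-shifted companions) in each entry then yields the four families \eqref{a1}--\eqref{a2}, the sums $-Z_{3k,3l+1}-Z_{3k,3l}$ and $Z_{3k+2,3l+2}+Z_{3k+2,3l+1}$ arising because two adjacent residue classes feed a single $V$-entry.

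The main obstacle is not conceptual but lies in reconciling the two normalisations. On the analytic side one must match the PPZ conventions --- the metric $\eta$, the grading, and the evaluation at $\phi=6^{-2/3}$ --- against the KdV normalisation carrying the ubiquitous $\sqrt{-2}$. The sharpest point is the reconciliation of the denominators: since $z+w=(z^{1/3}+w^{1/3})(z^{2/3}-z^{1/3}w^{1/3}+w^{2/3})$ whereas Theorem \ref{main}(iii) carries $\alpha-\beta=z^{-2/3}-w^{-2/3}$, passing between the two leaves the factor $z^{2/3}-z^{1/3}w^{1/3}+w^{2/3}$, which must be absorbed by the residue-modulo-$3$ decomposition before integer powers of $z,w$ can be read off. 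Finally, proving the symplectic property of $G$ --- equivalently the constancy of the $c$--$q$ Wronskian, equivalently Theorem \ref{main}(ii) --- is the one place where a genuine computation rather than bookkeeping is needed, and I would isolate it as a short lemma before assembling the two parts.
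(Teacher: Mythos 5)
Your plan is a genuinely different route from the paper's, and while its first half can be made to work, its second half contains a step that fails as stated.

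On the $R$-matrix identity: the paper's proof is far more economical than yours, because it simply quotes the explicit PPZ expression for $R(z)$ at $\phi=6^{-2/3}$, which is already written in terms of the very same coefficients $c_k,q_k$ that build $G(\lambda)$ in \eqref{gkdv}; since $\det G\equiv 1$ gives $G^{-1}=\sigma_2G^T\sigma_2$, the relation \eqref{relation-RG} is a mechanical entry-by-entry check (the off-diagonal entries pick up exactly the factors $z^{\pm 1/3}$ needed to turn $\lambda^{-3k-1}$ and $\lambda^{-3k-2}$ into $z^{2k+1}$). Your alternative --- characterize $R(z)$ as the normalized formal solution of the transformed Kac--Schwarz system --- is conceptually attractive and your first-order system for $(c,S_\lambda c)$ is correct; uniqueness also does hold, though not for the reason you give: if $\tilde R_1,\tilde R_2$ are two normalized formal solutions of the same $z$-ODE, then $Q=\tilde R_1^{-1}\tilde R_2$ satisfies $\frac{dQ}{dz}=\left[\frac{\sigma_3}{6z},Q\right]$, and the eigenvalues $0,\pm\frac13$ of $\mathrm{ad}_{\sigma_3/6}$ are never nonzero integers, so $Q=I$. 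The symplectic condition plays no role here; what does the work is the integer grading enforced by the $z^{\pm\sigma_3/6}$ conjugation. The substantive burden in your route is the assertion that the PPZ matrix satisfies this particular ODE (i.e.\ matching it with the $A_2$ quantum differential equation at $\phi=6^{-2/3}$, homogeneity included); this is exactly the content that the paper's direct comparison renders trivial, so you end up re-deriving what PPZ already provide.

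The genuine gap is in your derivation of \eqref{a1}--\eqref{a2}. From \eqref{relation-RG} and $\det G\equiv1$ one gets $R^*(w)=\eta R(w)^T\eta=w^{\sigma_3/6}\,\sigma_3\,G(\beta)\,\sigma_3\,w^{-\sigma_3/6}$ with $\beta=w^{-2/3}$, hence
\[
R^*(w)R(z)\;=\;w^{\sigma_3/6}\,\sigma_3\,G(\beta)\,\sigma_3\,D\,G(\alpha)^{-1}\,z^{-\sigma_3/6},
\qquad D:=w^{-\sigma_3/6}z^{\sigma_3/6}=\mathrm{diag}\left((z/w)^{1/6},(w/z)^{1/6}\right),
\]
with $\alpha=z^{-2/3}$. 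The diagonal factor $D$ sits \emph{between} the two $G$-factors and does not cancel (it would only for $w=z$), so $R^*(w)R(z)-I$ is \emph{not} a conjugate, nor even a two-sided entrywise rescaling, of $I-G(\alpha)G(\beta)^{-1}$: in the $(i,j)$ entry the two terms of the sum over the internal index carry \emph{different} weights $(z/w)^{\pm1/6}$. Consequently Lemma \ref{m-aff-2} / Theorem \ref{main}(iii) cannot be applied to the numerator as a matrix identity, which is the pivot of your argument; what remains is an entry-by-entry, residue-class-by-residue-class identification that also has to handle the $\sigma_3$-twists from $\eta$ and the non-polynomial ratio between $(w+z)$ and $(\alpha-\beta)$ --- you flag the latter, but ``absorbed by the residue-modulo-3 decomposition'' is precisely the statement requiring proof, and it is the whole content of \eqref{a1}--\eqref{a2}. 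The paper sidesteps all of this: the $V$-matrices satisfy the two-step recursion $V_{k,l+1}+V_{k+1,l}=V_{k,0}V_{0,l}$ and the symmetry $V^*_{k,l}=V_{l,k}$, mirroring $Z_{k+1,l}-Z_{k,l+1}=Z_{k,0}Z_{0,l}$ of \eqref{eq:Z_recursion} and Lemma \ref{m-symm}; by the uniqueness argument of Lemma \ref{unique-Z} it then suffices to verify \eqref{a1}--\eqref{a2} on the boundary data $V_{k,0}$, $V_{0,l}$, which follows directly from \eqref{relation-RG}. I would either adopt that recursion-plus-boundary argument, or, if you keep the generating-function route, first state and prove the exact entrywise identity with the factor $D$ included before invoking Lemma \ref{m-aff-2}.
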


The paper is organized as follows: In Sect.\,\ref{sec:sato_matrix_aff_coords}, we recall the Sato Grassmannian for the KdV hierarchy and introduce matrix-valued affine coordinates, followed by Sato's definition of the tau function in Sect.\,\ref{sec:tau}. In Sect.\,\ref{sec:WK}, we apply the general construction worked out in Sect.~\ref{sec:sato_matrix_aff_coords} and \ref{sec:tau} to the particular case of the Witten-Kontsevich tau function and prove the above theorems. In Sect.\,\ref{sec:A2_Frobenius}, we give the precise description of Thm.\,\ref{two-affine}. Further remarks are given in Sect.\,\ref{sec:conclusion}.

\section{Sato Grassmannian for the KdV hierarchy and matrix-valued affine coordinates}
\label{sec:sato_matrix_aff_coords}
Let $\C[[\lambda^{-1}]]$ denote the linear space of formal series with finitely many terms of positive powers, and consider the Sato Grassmannian $GM$ as defined originally in \cite{Sato}. 

A point $W\in GM$ is a subspace of $\C[[\lambda^{-1}]]$ which can be written as a linear span of a set of basis vectors. A particularly useful choice of basis of $W$ is of the following form 
\be\label{def-affine}
W = \Span_\mathbb{C}\left\{\lambda^l +\sum_{k=0}^{\infty}A_{k,l} \lambda^{-k-1}\right\}_{l=0}^{\infty},
\ee
where the coefficients $A_{k,l}$ are called the \textit{affine coordinates} of $W$. Such a choice is not always possible: existence of basis of form \eqref{def-affine} characterizes the \emph{big cell} of $GM$ (see \cite{Segal_Wilson} for the details). Given a point $W$ in $GM,$ it should be noted that the affine coordinates 
\be
\{A_{k,l}\, | \,k,l\geq 0\}
\ee 
(if exist) must be unique. 

Let us consider the infinite dimensional submanifold 
\be
GM_2:= \{W\in GM\ \colon\ \lambda^2W \subset W\} \subset GM
\ee
associated to the KdV hierarchy. Generic points of $GM_2$ can be paramertrized by two formal series in $\C[[\lambda^{-1}]]$ with leading terms $1$. To be precise, for any point $W\in GM_2$ satisfying the big cell condition, there exists
\be
a(\lambda) = 1+\sum_{k=1}^{\infty}a_k\, \lambda^{-k}, \qquad b(\lambda) = 1+\sum_{k=1}^{\infty}b_k\, \lambda^{-k}
\ee
such that
\be
W = \Span_\mathbb{C}\left\{\lambda^{2k}a(\lambda), \lambda^{2k+1}b(\lambda)\right\}_{k=0}^{\infty}.
\ee
As above denote by $A_{i,j}$ the corresponding affine coordinates. In the infinite matrix notation, the subspace $W$ can be visualized as an infinite collection of doubly infinite column vectors (see Figure \ref{fig:aff_coords}).

\begin{figure}[htb!]
\begin{center}
\begin{tikzpicture}[scale=0.65]
\draw[very thick] (0,0)--(4,0);
\draw (0,2) rectangle (2,4);
\draw (0,0) rectangle (2,2);
\draw (0,-2) rectangle (2,0);
\node at (0.5,5) {$\vdots$};
\node at (1.5,5) {$\vdots$};
\node at (2.5,5) {$\vdots$};
\node at (3.5,5) {$\vdots$};
\node at (5,5) {$\iddots$};
\node at (5,3.5) {$\cdots$};
\node at (5,2.5) {$\cdots$};
\node at (5,1.5) {$\cdots$};
\node at (5,0.5) {$\cdots$};
\node at (5,-0.5) {$\cdots$};
\node at (5,-1.5) {$\cdots$};
\node at (5,-2.5) {$\cdots$};
\node at (5,-3.5) {$\cdots$};
\node at (5,-4.5) {$\ddots$};
\node at (0.5,2.5) {$a_3$};
\node at (1.5,2.5) {$b_4$};
\node at (0.5,3.5) {$a_4$};
\node at (1.5,3.5) {$b_5$};
\node at (0.5,0.5) {$a_1$};
\node at (1.5,0.5) {$b_2$};
\node at (0.5,1.5) {$a_2$};
\node at (1.5,1.5) {$b_3$};
\node at (0.5,-0.5) {$1$};
\node at (1.5,-0.5) {$b_1$};
\node at (0.5,-1.5) {$0$};
\node at (1.5,-1.5) {$1$};
\draw (2,2) rectangle (4,4);
\draw (2,0) rectangle (4,2);
\draw (2,-2) rectangle (4,0);
\draw (2,-4) rectangle (4,-2);
\node at (2.5,2.5) {$a_5$};
\node at (3.5,2.5) {$b_6$};
\node at (2.5,3.5) {$a_6$};
\node at (3.5,3.5) {$b_7$};
\node at (2.5,0.5) {$a_3$};
\node at (3.5,0.5) {$b_4$};
\node at (2.5,1.5) {$a_4$};
\node at (3.5,1.5) {$b_5$};
\node at (2.5,-1.5) {$a_1$};
\node at (3.5,-1.5) {$b_2$};
\node at (2.5,-0.5) {$a_2$};
\node at (3.5,-0.5) {$b_3$};
\node at (2.5,-2.5) {$1$};
\node at (3.5,-2.5) {$b_1$};
\node at (2.5,-3.5) {$0$};
\node at (3.5,-3.5) {$1$};
\end{tikzpicture}
\begin{tikzpicture}[scale=0.65]
\draw[color=white] (-1,5)--(1,-5);
\node at (0,0) {$\leadsto$};
\end{tikzpicture}
\begin{tikzpicture}[scale=0.65]
\draw[very thick] (0,0)--(4,0);
\draw (0,2) rectangle (2,4);
\draw (0,0) rectangle (2,2);
\draw (0,-2) rectangle (2,0);
\node at (0.5,5) {$\vdots$};
\node at (1.5,5) {$\vdots$};
\node at (2.5,5) {$\vdots$};
\node at (3.5,5) {$\vdots$};
\node at (5,5) {$\iddots$};
\node at (5,3.5) {$\cdots$};
\node at (5,2.5) {$\cdots$};
\node at (5,1.5) {$\cdots$};
\node at (5,0.5) {$\cdots$};
\node at (5,-0.5) {$\cdots$};
\node at (5,-1.5) {$\cdots$};
\node at (5,-2.5) {$\cdots$};
\node at (5,-3.5) {$\cdots$};
\node at (5,-4.5) {$\ddots$};
\node at (0.5,2.5) {$A_{2,0}$};
\node at (1.5,2.5) {$A_{2,1}$};
\node at (0.5,3.5) {$A_{3,0}$};
\node at (1.5,3.5) {$A_{3,1}$};
\node at (0.5,0.5) {$A_{0,0}$};
\node at (1.5,0.5) {$A_{0,1}$};
\node at (0.5,1.5) {$A_{1,0}$};
\node at (1.5,1.5) {$A_{1,1}$};
\node at (0.5,-0.5) {$1$};
\node at (1.5,-0.5) {$0$};
\node at (0.5,-1.5) {$0$};
\node at (1.5,-1.5) {$1$};
\draw (2,2) rectangle (4,4);
\draw (2,0) rectangle (4,2);
\draw (2,-2) rectangle (4,0);
\draw (2,-4) rectangle (4,-2);
\node at (2.5,2.5) {$A_{2,2}$};
\node at (3.5,2.5) {$A_{2,3}$};
\node at (2.5,3.5) {$A_{3,2}$};
\node at (3.5,3.5) {$A_{3,3}$};
\node at (2.5,0.5) {$A_{0,2}$};
\node at (3.5,0.5) {$A_{0,3}$};
\node at (2.5,1.5) {$A_{1,2}$};
\node at (3.5,1.5) {$A_{1,3}$};
\node at (2.5,-1.5) {$0$};
\node at (3.5,-1.5) {$0$};
\node at (2.5,-0.5) {$0$};
\node at (3.5,-0.5) {$0$};
\node at (2.5,-2.5) {$1$};
\node at (3.5,-2.5) {$0$};
\node at (2.5,-3.5) {$0$};
\node at (3.5,-3.5) {$1$};
\end{tikzpicture}
\end{center}
\caption{Affine coordinates on the big cell of $GM_2$}
\label{fig:aff_coords}
\end{figure}

Let us now define the formal loop group element associated to the subspace $W$ \`a la Segal--Wilson \cite{Segal_Wilson} as
\be
G(\lambda) =\sum_{k=0}^{\infty}G_k\, \lambda^{-k} \in \mathfrak{gl}(2,\C)[[\lambda^{-1}]]
\ee
by
\begin{align}
G(\lambda)_{11}& =\sum_{k=0}^\infty a_{2k}\, \lambda^{-k}, \quad G(\lambda)_{12} =\sum_{k=0}^\infty b_{2k+1}\,\lambda^{-k},\\
G(\lambda)_{21}& =\sum_{k=1}^\infty a_{2k-1}\, \lambda^{-k},\quad  G(\lambda)_{22} =\sum_{k=0}^\infty b_{2k}\, \lambda^{-k},
\end{align}
(see Figure \ref{fig:loop_rep}), or equivalently by
\be
G(\lambda) = \frac{1}{2}\,\lambda^{\frac{1}{4}\sigma_3}
\begin{pmatrix}
1 & 1\\
1 & -1
\end{pmatrix}
\begin{pmatrix}\label{gdef}
a(\sqrt{\lambda}) & b(\sqrt{\lambda})\\
a(-\sqrt{\lambda}) & -b(-\sqrt{\lambda})
\end{pmatrix}
\lambda^{-\frac{1}{4}\sigma_3}
\ee
where
\be
\sigma_3=\begin{pmatrix}
1 & 0\\
0 & -1
\end{pmatrix}.
\ee
\begin{figure}[htb!]
\begin{center}
\begin{tikzpicture}[scale=0.5]
\draw[very thick] (0,0)--(4,0);
\draw (0,2) rectangle (2,4);
\draw (0,0) rectangle (2,2);
\draw (0,-2) rectangle (2,0);
\node at (0.5,5) {$\vdots$};
\node at (1.5,5) {$\vdots$};
\node at (2.5,5) {$\vdots$};
\node at (3.5,5) {$\vdots$};
\node at (5,5) {$\iddots$};
\node at (5,3.5) {$\cdots$};
\node at (5,2.5) {$\cdots$};
\node at (5,1.5) {$\cdots$};
\node at (5,0.5) {$\cdots$};
\node at (5,-0.5) {$\cdots$};
\node at (5,-1.5) {$\cdots$};
\node at (5,-2.5) {$\cdots$};
\node at (5,-3.5) {$\cdots$};
\node at (5,-4.5) {$\ddots$};
\node at (0.5,2.5) {$a_3$};
\node at (1.5,2.5) {$b_4$};
\node at (0.5,3.5) {$a_4$};
\node at (1.5,3.5) {$b_5$};
\node at (0.5,0.5) {$a_1$};
\node at (1.5,0.5) {$b_2$};
\node at (0.5,1.5) {$a_2$};
\node at (1.5,1.5) {$b_3$};
\node at (0.5,-0.5) {$1$};
\node at (1.5,-0.5) {$b_1$};
\node at (0.5,-1.5) {$0$};
\node at (1.5,-1.5) {$1$};
\draw (2,2) rectangle (4,4);
\draw (2,0) rectangle (4,2);
\draw (2,-2) rectangle (4,0);
\draw (2,-4) rectangle (4,-2);
\node at (0.5,5) {$\vdots$};
\node at (1.5,5) {$\vdots$};
\node at (2.5,5) {$\vdots$};
\node at (3.5,5) {$\vdots$};
\node at (5,5) {$\iddots$};
\node at (5,3.5) {$\cdots$};
\node at (5,2.5) {$\cdots$};
\node at (5,1.5) {$\cdots$};
\node at (5,0.5) {$\cdots$};
\node at (5,-0.5) {$\cdots$};
\node at (5,-1.5) {$\cdots$};
\node at (5,-2.5) {$\cdots$};
\node at (5,-3.5) {$\cdots$};
\node at (5,-4.5) {$\ddots$};
\node at (2.5,2.5) {$a_5$};
\node at (3.5,2.5) {$b_6$};
\node at (2.5,3.5) {$a_6$};
\node at (3.5,3.5) {$b_7$};
\node at (2.5,0.5) {$a_3$};
\node at (3.5,0.5) {$b_4$};
\node at (2.5,1.5) {$a_4$};
\node at (3.5,1.5) {$b_5$};
\node at (2.5,-1.5) {$a_1$};
\node at (3.5,-1.5) {$b_2$};
\node at (2.5,-0.5) {$a_2$};
\node at (3.5,-0.5) {$b_3$};
\node at (2.5,-2.5) {$1$};
\node at (3.5,-2.5) {$b_1$};
\node at (2.5,-3.5) {$0$};
\node at (3.5,-3.5) {$1$};
\end{tikzpicture}
\begin{tikzpicture}[scale=0.5]
\draw[color=white] (-1,5)--(1,-5);
\node at (0,0) {$\leadsto$};
\end{tikzpicture}
\begin{tikzpicture}[scale=0.5]
\draw[very thick] (0,0)--(4,0);
\draw (0,2) rectangle (2,4);
\draw (0,0) rectangle (2,2);
\draw (0,-2) rectangle (2,0);
\node at (1,5) {$\vdots$};
\node at (3,5) {$\vdots$};
\node at (5,5) {$\iddots$};
\node at (5,3) {$\cdots$};
\node at (5,1) {$\cdots$};
\node at (5,-1) {$\cdots$};
\node at (5,-3) {$\cdots$};
\node at (5,-4.5) {$\ddots$};
\node at (1,3) {$G_2$};
\node at (1,1) {$G_1$};
\node at (1,-1) {$G_0$};
\draw (2,2) rectangle (4,4);
\draw (2,0) rectangle (4,2);
\draw (2,-2) rectangle (4,0);
\draw (2,-4) rectangle (4,-2);
\node at (3,3) {$G_3$};
\node at (3,1) {$G_2$};
\node at (3,-1) {$G_1$};
\node at (3,-3) {$G_0$};
\end{tikzpicture}
\end{center}
\caption{Condensed notation of the subspace data in terms of $2\times 2$ matrices}
\label{fig:loop_rep}
\end{figure}

Below without loss of generality we assume that
\be
G_0 =I,
\ee
that is, $b_1=0$. Indeed,  this can always be achieved by a right multiplication by an upper unitriangular constant matrix that does not change $W$.

\begin{definition}\label{Z-def}
The $2\times 2$ matrices 
\be
Z_{k,l} :=
\begin{pmatrix}
A_{2k+1,2l} & A_{2k+1,2l+1}\\
A_{2k,2l} & A_{2k,2l+1}
\end{pmatrix},\qquad k,l\geq 0,
\ee
will be referred to as the \emph{matrix-valued affine coordinates} of $W \in GM_2$
where $A_{m,n}$ are the affine coordinates of $W \in GM$.
\end{definition}
\begin{figure}[htb!]
\begin{center}
\begin{tikzpicture}[scale=0.5]
\draw[very thick] (0,0)--(6,0);
\draw (0,2) rectangle (2,4);
\draw (0,0) rectangle (2,2);
\draw (0,-2) rectangle (2,0);
\node at (1,5) {$\vdots$};
\node at (3,5) {$\vdots$};
\node at (5,5) {$\vdots$};
\node at (7,5) {$\iddots$};
\node at (7,3) {$\cdots$};
\node at (7,1) {$\cdots$};
\node at (7,-1) {$\cdots$};
\node at (7,-3) {$\cdots$};
\node at (7,-5) {$\cdots$};
\node at (7,-6.5) {$\ddots$};
\node at (1,3) {$G_2$};
\node at (1,1) {$G_1$};
\node at (1,-1) {$G_0$};
\draw (2,2) rectangle (4,4);
\draw (2,0) rectangle (4,2);
\draw (2,-2) rectangle (4,0);
\draw (2,-4) rectangle (4,-2);
\node at (3,3) {$G_3$};
\node at (3,1) {$G_2$};
\node at (3,-1) {$G_1$};
\node at (3,-3) {$G_0$};
\draw (4,2) rectangle (6,4);
\draw (4,0) rectangle (6,2);
\draw (4,-2) rectangle (6,0);
\draw (4,-4) rectangle (6,-2);
\draw (4,-6) rectangle (6,-4);
\node at (5,3) {$G_4$};
\node at (5,1) {$G_3$};
\node at (5,-1) {$G_2$};
\node at (5,-3) {$G_1$};
\node at (5,-5) {$G_0$};
\end{tikzpicture}
\begin{tikzpicture}[scale=0.5]
\draw[color=white] (-1,5)--(1,-7);
\node at (0,0) {$\leadsto$};
\end{tikzpicture}
\begin{tikzpicture}[scale=0.5]
\draw[very thick] (0,0)--(6,0);
\draw (0,2) rectangle (2,4);
\draw (0,0) rectangle (2,2);
\draw (0,-2) rectangle (2,0);
\node at (1,5) {$\vdots$};
\node at (3,5) {$\vdots$};
\node at (5,5) {$\vdots$};
\node at (7,5) {$\iddots$};
\node at (7,3) {$\cdots$};
\node at (7,1) {$\cdots$};
\node at (7,-1) {$\cdots$};
\node at (7,-3) {$\cdots$};
\node at (7,-5) {$\cdots$};
\node at (7,-6.5) {$\ddots$};
\node at (1,3) {$Z_{10}$};
\node at (1,1) {$Z_{00}$};
\node at (1,-1) {$I$};
\draw (2,2) rectangle (4,4);
\draw (2,0) rectangle (4,2);
\draw (2,-2) rectangle (4,0);
\draw (2,-4) rectangle (4,-2);
\node at (3,3) {$Z_{11}$};
\node at (3,1) {$Z_{01}$};
\node at (3,-1) {$0$};
\node at (3,-3) {$I$};
\draw (4,2) rectangle (6,4);
\draw (4,0) rectangle (6,2);
\draw (4,-2) rectangle (6,0);
\draw (4,-4) rectangle (6,-2);
\draw (4,-6) rectangle (6,-4);
\node at (5,3) {$Z_{12}$};
\node at (5,1) {$Z_{02}$};
\node at (5,-1) {$0$};
\node at (5,-3) {$0$};
\node at (5,-5) {$I$};
\end{tikzpicture}
\end{center}
\label{fig:matrix-affine}
\caption{Matrix-valued affine coordinates}
\end{figure}

The inverse of $G(\lambda)$ is
\be
G(\lambda)^{-1}= \frac{1}{2\det(G(\lambda))}\,\lambda^{1/4\sigma_3}
\begin{pmatrix}
b(-\sqrt{\lambda}) & b(\sqrt{\lambda})\\
a(-\sqrt{\lambda}) & -a(\sqrt{\lambda})
\end{pmatrix}
\begin{pmatrix}
1 & 1\\
1 & -1
\end{pmatrix}
\lambda^{-1/4\sigma_3 },
\ee
whose formal series expansion is of the form
\be
G(\lambda)^{-1} = I +\sum_{k=0}^{\infty}U_k\, \lambda^{-k}.
\ee
\begin{lemma} The matrix-valued affine coordinates $Z_{k,l}$ can be read off from the series
\be
\label{eq:Z_generating}
G(\lambda)\left(\lambda^k G(\lambda)^{-1}\right)_{+} =\lambda^k+ \sum_{l=0}^{\infty}Z_{l,k}\lambda^{-l-1} ,
\ee
where $(\,)_{+}$ stands for taking the polynomial part of a formal series in ${\mathfrak{gl}(2,\C)}[[\lambda^{-1}]]$.
\end{lemma}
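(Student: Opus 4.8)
The plan is to recognise the left-hand side of \eqref{eq:Z_generating} as the \emph{unique normalised basis block} of the matrix Grassmannian point attached to $W$, and then to match its coefficients against the scalar affine coordinates $A_{m,n}$ through the even/odd decomposition already built into \eqref{gdef}. Throughout I keep the scalar variable $z$ and the matrix variable $\lambda=z^2$ strictly separate, so that $a(\sqrt{\lambda})$ in \eqref{gdef} means $a(z)$. First I would introduce the even/odd splitting isomorphism
\[
\Phi\colon \C[[z^{-1}]]\longrightarrow \C[[\lambda^{-1}]]^2,\qquad f(z)=f_0(\lambda)+z\,f_1(\lambda)\ \longmapsto\ \binom{f_0(\lambda)}{f_1(\lambda)},\qquad \lambda=z^2,
\]
which records the even and odd parts of a scalar series. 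A short computation shows $\Phi$ respects the polarisations, $\Phi(\C[z])=\C[\lambda]^2$ and $\Phi(z^{-1}\C[[z^{-1}]])=\lambda^{-1}\C[[\lambda^{-1}]]^2$. The one genuine calculation in this step is to feed the generators of $W$ into $\Phi$ and read off, from the conjugated form \eqref{gdef}, that
\[
\Phi\bigl(z^{2k}a(z)\bigr)=\lambda^k\,G(\lambda)\,e_1,\qquad \Phi\bigl(z^{2k+1}b(z)\bigr)=\lambda^k\,G(\lambda)\,e_2 .
\]
Hence $\widehat{W}:=\Phi(W)=G(\lambda)\,\C[\lambda]^2$, and because $W$ lies in the big cell (as assumed for \eqref{def-affine} to exist) so does $\widehat{W}$; in particular $\widehat{W}\cap\lambda^{-1}\C[[\lambda^{-1}]]^2=0$.

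Next I would put $\Xi_k:=G(\lambda)\bigl(\lambda^k G(\lambda)^{-1}\bigr)_{+}$, the left-hand side of \eqref{eq:Z_generating}. Splitting $\lambda^k G^{-1}=(\lambda^kG^{-1})_{+}+(\lambda^kG^{-1})_{-}$ and using $G^{-1}=I+\order{\lambda^{-1}}$ (which holds since $G_0=I$) gives $\Xi_k=\lambda^kI-G(\lambda)(\lambda^kG^{-1})_{-}=\lambda^kI+\order{\lambda^{-1}}$, while every column of $\Xi_k$ lies in $G(\lambda)\C[\lambda]^2=\widehat{W}$ because $(\lambda^kG^{-1})_{+}$ is a matrix polynomial. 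Thus $\Xi_k$ is a normalised block of the form $\lambda^kI+\sum_{l}C_{l,k}\lambda^{-l-1}$ whose columns lie in $\widehat{W}$.

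The crucial structural observation is uniqueness: any block of the shape $\lambda^kI+\order{\lambda^{-1}}$ whose columns lie in $\widehat{W}$ is determined by $\widehat{W}$, since the difference of two candidates would have columns in $\widehat{W}\cap\lambda^{-1}\C[[\lambda^{-1}]]^2=0$. It therefore suffices to produce one such block directly out of the affine coordinates. For this I would apply $\Phi$ to the normalised scalar basis vectors $\xi_m(z)=z^m+\sum_n A_{n,m}z^{-n-1}$ of \eqref{def-affine}: splitting $\xi_{2k}$ and $\xi_{2k+1}$ into even and odd parts and sorting the $A_{n,m}$ by the parities of $n$ and $m$ shows that $\bigl(\Phi(\xi_{2k}),\Phi(\xi_{2k+1})\bigr)$ are exactly the two columns of $\lambda^kI+\sum_{l}Z_{l,k}\lambda^{-l-1}$, with $Z_{l,k}$ as in Definition \ref{Z-def}. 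By uniqueness this block equals $\Xi_k$, which is precisely \eqref{eq:Z_generating}.

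I expect the only real difficulty to be bookkeeping rather than conceptual. One must keep $z$ and $\lambda=z^2$ apart, correctly absorb the conjugation $\lambda^{\pm\frac14\sigma_3}$ of \eqref{gdef} when verifying $\Phi(z^{2k}a)=\lambda^kG\,e_1$, and check that the parity sorting of $A_{n,m}$ reproduces the precise placement of entries in $Z_{l,k}$ (odd first index to the top row, even to the bottom; even second index to the left column, odd to the right). Once the dictionary $\Phi$ and its compatibility with \eqref{gdef} are in place, the remainder is a formal manipulation of principal and polynomial parts.
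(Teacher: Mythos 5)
Your proof is correct, and the details check out: the dictionary $\Phi(z^{2k}a(z))=\lambda^k G(\lambda)e_1$, $\Phi(z^{2k+1}b(z))=\lambda^k G(\lambda)e_2$ is exactly what \eqref{gdef} encodes, and the parity sorting of the $A_{n,m}$ in your last step reproduces Definition \ref{Z-def} precisely. Where your route differs from the paper's: the paper performs the same normalization computation you do for $\Xi_k$ (expanding $(\lambda^k G^{-1})_{+}=\lambda^k+U_1\lambda^{k-1}+\cdots+U_k$ and concluding that the left-hand side of \eqref{eq:Z_generating} is $\lambda^k+\order{\lambda^{-1}}$), but then disposes of the identification step in one sentence --- ``the rest follows from Gauss elimination (which is equivalent to multiplication on the right by the block matrix $[U_{j-i}]_{i,j=1}^{\infty}$)''. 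That is, the paper views \eqref{eq:Z_generating} as block column $k$ of the product of the frame matrix of Figure \ref{fig:loop_rep} with the unitriangular block matrix $[U_{j-i}]$, and identifies the resulting blocks with the $Z_{l,k}$ by implicitly appealing to the uniqueness of the normalized frame. Your proof makes explicit precisely the two things this appeal suppresses: the splitting isomorphism $\Phi$, which justifies equating the $2\times 2$ block picture with the scalar affine coordinates of \eqref{def-affine} (this is what Figures \ref{fig:aff_coords} and \ref{fig:loop_rep} encode without proof), and the big-cell uniqueness statement $\widehat{W}\cap\lambda^{-1}\C[[\lambda^{-1}]]^2=0$, which turns ``Gauss elimination yields the affine coordinates'' from a procedure into a theorem. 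The trade-off is worth noting: your argument is self-contained and closer to the Segal--Wilson subspace viewpoint, while the paper's block-multiplication route produces, with no additional work, the explicit formula $Z_{k,l}=-\sum_{j=0}^{k}G_jU_{k+l+1-j}$ of Lemma \ref{m-aff-1}, which does not fall out of your uniqueness argument by itself.
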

\begin{proof} Since
\begin{align}
G(\lambda)\left(\lambda^k G(\lambda)^{-1}\right)_{+} &= G(\lambda)(\lambda^k+U_1 \lambda^{k-1}+\dots+U_k)\\
&= \lambda^{k}G(\lambda)\left(G(\lambda)^{-1}+\order{\lambda^{-{k+1}}}\right)\\
&= \lambda^{k}+\order{\lambda^{-1}},
\end{align} 
the formal Laurent series on the l.h.s.~of \eqref{eq:Z_generating} is indeed of the required form.
The rest follows from Gauss elimination (which is equivalent to multiplication on the right by the block matrix $[U_{j-i}]_{i,j=1}^{\infty}$).
\end{proof}

\begin{lemma}\label{m-aff-1} The matrix-valued affine coordinates $Z_{k,l}$ have the following expressions
\be
\label{eq:Z}
Z_{k,l} = -\sum_{j=0}^{k}G_{j}U_{k+l+1-j},\qquad \forall\,k,l\geq 0.
\ee
Moreover, the following recursion formula holds true:
\be
\label{eq:Z_recursion}
Z_{k+1,l}-Z_{k,l+1} = Z_{k,0}Z_{0,l}, \qquad \forall\,k,l\geq 0.
\ee
\end{lemma}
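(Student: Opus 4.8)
The plan is to derive the closed formula \eqref{eq:Z} directly from the generating relation \eqref{eq:Z_generating} of the preceding lemma together with the defining identity $G(\lambda)G(\lambda)^{-1}=I$, and then to obtain the recursion \eqref{eq:Z_recursion} as an immediate consequence of that formula. Throughout I would fix the convention $U_0=I$, so that both $G(\lambda)=\sum_{m\ge0}G_m\lambda^{-m}$ and $G(\lambda)^{-1}=\sum_{i\ge0}U_i\lambda^{-i}$ have leading coefficient $I$; this is harmless because $U_0$ never enters \eqref{eq:Z} (all $U$-indices there are at least $l+1\ge1$).

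First I would expand $\bigl(\lambda^kG(\lambda)^{-1}\bigr)_{+}=\sum_{i=0}^{k}U_i\lambda^{k-i}$, multiply on the left by the series $G(\lambda)$, and extract the coefficient of $\lambda^{-l-1}$ on the left-hand side of \eqref{eq:Z_generating}. Matching with the right-hand side identifies that coefficient with $Z_{l,k}$; after renaming indices this reads
\be
Z_{k,l}=\sum_{i=0}^{l}G_{k+l+1-i}\,U_i.
\ee
This is a finite sum over the ``tail'' $0\le i\le l$, and the goal is to convert it into a sum over the complementary ``head'' $0\le j\le k$. The key step is to invoke the convolution identity coming from $G(\lambda)G(\lambda)^{-1}=I$, namely $\sum_{j=0}^{n}G_jU_{n-j}=0$ for every $n\ge1$. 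Taking $n=k+l+1$ and splitting the summation range at $j=k$, the block $j=k+1,\dots,k+l+1$ becomes, after the substitution $i=k+l+1-j$, exactly the displayed expression for $Z_{k,l}$, while the complementary block $j=0,\dots,k$ is $\sum_{j=0}^{k}G_jU_{k+l+1-j}$. Since the two blocks sum to zero, this yields \eqref{eq:Z}.

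I expect the only delicate point to be this index bookkeeping, in particular the role-swap between $k$ and $l$ caused by extracting coefficients of $\bigl(\lambda^kG(\lambda)^{-1}\bigr)_{+}$ (the first index of $Z_{l,k}$ in \eqref{eq:Z_generating} is the power index, not the input $k$); everything else is formal manipulation of power series. Keeping careful track of which block reproduces $Z_{k,l}$ and which produces the stated head sum is where I would be most careful.

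Finally, for the recursion I would use \eqref{eq:Z} directly. Subtracting the expressions for $Z_{k+1,l}$ and $Z_{k,l+1}$ telescopes all common terms and leaves only the single top term, so that $Z_{k+1,l}-Z_{k,l+1}=-G_{k+1}U_{l+1}$. It then remains to identify the two factors on the right-hand side of \eqref{eq:Z_recursion}: applying the convolution identity with $n=k+1$ to the specialization $Z_{k,0}=-\sum_{j=0}^{k}G_jU_{k+1-j}$ gives $Z_{k,0}=G_{k+1}$, while $G_0=I$ gives $Z_{0,l}=-U_{l+1}$ immediately from \eqref{eq:Z}. Multiplying, $Z_{k,0}Z_{0,l}=-G_{k+1}U_{l+1}$, which coincides with the telescoped difference and establishes \eqref{eq:Z_recursion}.
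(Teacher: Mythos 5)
Your proof is correct and follows essentially the same route as the paper: the closed formula \eqref{eq:Z} is extracted from the block-matrix/Gauss-elimination structure of the preceding lemma (your coefficient extraction from \eqref{eq:Z_generating} plus the convolution identity $\sum_{j=0}^{n}G_jU_{n-j}=0$ is exactly what the paper's terse ``follows from the Gauss elimination procedure'' refers to), and the recursion \eqref{eq:Z_recursion} is obtained by the identical telescoping computation $Z_{k+1,l}-Z_{k,l+1}=-G_{k+1}U_{l+1}=Z_{k,0}Z_{0,l}$. Your write-up is in fact more complete than the paper's, since you explicitly justify the boundary identifications $Z_{k,0}=G_{k+1}$ and $Z_{0,l}=-U_{l+1}$ (which the paper only records later, in \eqref{ini-Z}) and you fix the convention $U_0=I$ that the paper leaves ambiguous.
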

\begin{proof} The formula \eqref{eq:Z} follows from the Gauss elimination procedure. It is straightforward to prove the recursion relation \eqref{eq:Z}:
\begin{align}
\nonumber
Z_{k+1,l}-Z_{k,l+1} &= -\sum_{j=0}^{k+1}G_{j}U_{k+l+2-j}+\sum_{j=0}^{k}G_{j}U_{k+l+2-j}\\
&= -G_{k+1}U_{l+1}= Z_{k,0}Z_{0,l}.
\end{align}
\end{proof}

It is clear that we have
\begin{lemma}\label{unique-Z}
The matrix-valued affine coordinates $Z_{k,l}$ can be uniquely specified by the recursion relations \eqref{eq:Z_recursion} together with knowledge of the boundary data $Z_{k,0}$ and $Z_{0,l},$ where 
\be\label{ini-Z}
Z_{k,0}= G_{k+1},\quad Z_{0,l}=-U_{l+1}.
\ee
\end{lemma}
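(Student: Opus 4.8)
The plan is to split Lemma \ref{unique-Z} into two elementary claims that follow from material already in hand: first, that the explicit boundary values in \eqref{ini-Z} are correct; and second, that the recursion \eqref{eq:Z_recursion} together with these boundary values determines every $Z_{k,l}$ uniquely. Since existence and well-definedness of the $Z_{k,l}$ are already granted by Lemma \ref{m-aff-1}, only these two verifications remain, which is why the statement can be introduced with ``It is clear.''

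First I would read off the boundary formulas directly from the closed form \eqref{eq:Z}, namely $Z_{k,l}=-\sum_{j=0}^{k}G_{j}U_{k+l+1-j}$. Putting $k=0$ and using $G_0=I$ gives $Z_{0,l}=-G_0U_{l+1}=-U_{l+1}$, which is the second identity in \eqref{ini-Z}. For the first, set $l=0$ to obtain $Z_{k,0}=-\sum_{j=0}^{k}G_{j}U_{k+1-j}$, and compare this with the $\lambda^{-(k+1)}$ coefficient of the identity $G(\lambda)\,G(\lambda)^{-1}=I$. Writing $G(\lambda)=\sum_{j\geq 0}G_j\lambda^{-j}$ with $G_0=I$ and $G(\lambda)^{-1}=\sum_{m\geq 0}H_m\lambda^{-m}$ with $H_0=I$, $H_m=U_m$ for $m\geq 1$, the vanishing of the $\lambda^{-(k+1)}$ coefficient reads $G_{k+1}+\sum_{j=0}^{k}G_jU_{k+1-j}=0$, i.e. $G_{k+1}=-\sum_{j=0}^{k}G_jU_{k+1-j}=Z_{k,0}$. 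This establishes \eqref{ini-Z}.

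For the uniqueness claim I would rewrite \eqref{eq:Z_recursion} in solved form
\be
Z_{k+1,l} = Z_{k,l+1} + Z_{k,0}\,Z_{0,l},\qquad k,l\geq 0,
\ee
and induct on the first index $k$, treating a whole row $\{Z_{k,l}\}_{l\geq 0}$ at each stage. The base case $k=0$ is precisely the prescribed boundary row $\{Z_{0,l}\}_{l\geq 0}$. Assuming the entire row $\{Z_{k,l}\}_{l\geq 0}$ is already determined, the right-hand side above involves only $Z_{k,l+1}$ (known by the inductive hypothesis), the entry $Z_{k,0}$ (the same row, hence known, and consistent with the given boundary datum), and the boundary value $Z_{0,l}$; thus $Z_{k+1,l}$ is forced for every $l\geq 0$. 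This propagates determinacy from row $k$ to row $k+1$ and completes the induction, so the full array is pinned down by the boundary data together with the recursion.

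I do not expect any genuine obstacle here: the only point needing a little care is index bookkeeping in the Cauchy-product comparison, in particular that all $U$-indices appearing in $Z_{k,0}$ and $Z_{0,l}$ are at least $1$ so that the convention for the leading term of $G(\lambda)^{-1}$ (namely $H_0=I$) never interferes. Once \eqref{eq:Z} and $G\,G^{-1}=I$ are invoked, both the boundary evaluation and the inductive propagation are routine.
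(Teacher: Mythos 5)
Your proof is correct and is essentially the paper's (implicit) argument: the paper states this lemma without proof, introducing it with ``It is clear that we have'' and adding only the consistency remark $U_1=-G_1$, and your two verifications --- reading the boundary values off the closed formula \eqref{eq:Z} together with the $\lambda^{-(k+1)}$ coefficient of $G(\lambda)G(\lambda)^{-1}=I$, then propagating uniqueness row by row from the solved recursion $Z_{k+1,l}=Z_{k,l+1}+Z_{k,0}Z_{0,l}$ --- are exactly the details the authors leave to the reader. As a minor bonus, your induction actually proves slightly more than the lemma asserts: the column data $Z_{k,0}$ is redundant, since each $Z_{k,0}$ needed on the right-hand side is already produced by the previous row, so the recursion and the single row $\{Z_{0,l}\}_{l\geq 0}$ determine the whole array.
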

Note that the initial condition for $Z_{0,0}$ is consistent since $U_{1} = -G_{1}$.
\begin{lemma}\label{m-aff-2}
The following formula holds true for the matrix-valued affine coordinates:
\be\label{G-Z}
 \frac{I-G(\alpha)G(\beta)^{-1}}{\alpha-\beta}=\sum_{k,l=0}^{\infty}Z_{k,l}\,\alpha^{-k-1}\beta^{-l-1}.
\ee
\end{lemma}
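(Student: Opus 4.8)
The plan is to prove \eqref{G-Z} by a direct resummation, taking as input the closed form \eqref{eq:Z} from Lemma \ref{m-aff-1} together with the expansions $G(\lambda)=\sum_{k\geq 0}G_k\lambda^{-k}$ (with $G_0=I$) and $G(\lambda)^{-1}=I+\sum_{p\geq 1}U_p\lambda^{-p}$. Everything is to be read as an identity in $\mathfrak{gl}(2,\C)[[\alpha^{-1},\beta^{-1}]]$; in particular the Cauchy-type kernel on the left of \eqref{G-Z} is interpreted through the elementary finite expansions $\frac{\beta^{-p}-\alpha^{-p}}{\alpha-\beta}=\sum_{i=0}^{p-1}\alpha^{-i-1}\beta^{-(p-1-i)-1}$, which are genuine polynomials in $\alpha^{-1},\beta^{-1}$, so no question of convergence or of a choice of expansion region arises.

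First I would substitute $Z_{k,l}=-\sum_{j=0}^{k}G_{j}U_{k+l+1-j}$ into the double series and reindex by $i:=k-j\geq 0$, so that $k=i+j$, $\alpha^{-k-1}=\alpha^{-j}\alpha^{-i-1}$, and the index of $U$ becomes $i+l+1\geq 1$. Since the pair $(i,j)$ now ranges independently over $\Z_{\geq 0}^2$, the factor $G_{j}$ decouples and the triple sum factorises as
\be
\sum_{k,l\geq 0}Z_{k,l}\,\alpha^{-k-1}\beta^{-l-1}
= -\Big(\sum_{j\geq 0}G_{j}\alpha^{-j}\Big)\sum_{i,l\geq 0}U_{i+l+1}\,\alpha^{-i-1}\beta^{-l-1}
= -\,G(\alpha)\sum_{i,l\geq 0}U_{i+l+1}\,\alpha^{-i-1}\beta^{-l-1},
\ee
where the matrix order ($G$ on the left, $U$ on the right) is preserved throughout.

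Next I would organise the remaining sum by the value of $p:=i+l+1$ and recognise the inner sum as the geometric identity above, giving $\sum_{i,l\geq 0}U_{i+l+1}\alpha^{-i-1}\beta^{-l-1}=\sum_{p\geq 1}U_{p}\,\frac{\beta^{-p}-\alpha^{-p}}{\alpha-\beta}$. Using $\sum_{p\geq 1}U_{p}\lambda^{-p}=G(\lambda)^{-1}-I$ this collapses to $\frac{G(\beta)^{-1}-G(\alpha)^{-1}}{\alpha-\beta}$, and multiplying by $-G(\alpha)$ together with $G(\alpha)G(\alpha)^{-1}=I$ produces exactly $\frac{I-G(\alpha)G(\beta)^{-1}}{\alpha-\beta}$, which is the claim.

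The main work here is bookkeeping rather than conceptual: one must keep the index conventions for $U$ consistent (the formula \eqref{eq:Z} only ever involves $U_{p}$ with $p\geq 1$, the constant term $I$ entering solely through $G_{0}=I$) and justify the term-by-term resummation of the geometric series. As a cross-check, and the cleanest way to catch any boundary error, I would instead match coefficients of $\alpha^{-m}\beta^{-n}$: the right-hand side of \eqref{G-Z}, multiplied by $\alpha-\beta$, contributes $Z_{m,n-1}-Z_{m-1,n}$ with the conventions $Z_{-1,\,\cdot}=Z_{\cdot,\,-1}=0$, and for $m,n\geq 1$ the two sums in \eqref{eq:Z} telescope to $-G_{m}U_{n}$, which is precisely the coefficient of $\alpha^{-m}\beta^{-n}$ in $-G(\alpha)G(\beta)^{-1}$. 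The three boundary cases $m=0$, $n=0$ and $(m,n)=(0,0)$ are settled using $Z_{0,l}=-U_{l+1}$, $Z_{k,0}=G_{k+1}$ from \eqref{ini-Z}, $U_{1}=-G_{1}$, and $G_{0}=I$; the last of these ensures the exact cancellation of the constant term $I-G_{0}\cdot I=0$ in the numerator, which is the only place a sign or index slip could go unnoticed.
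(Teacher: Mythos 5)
Your proof is correct. The paper's own proof of Lemma \ref{m-aff-2} is a single sentence --- ``follows from the recursion \eqref{eq:Z_recursion}'' --- which, once expanded, is essentially your coefficient-matching cross-check: multiply the series by $\alpha-\beta$, observe that the coefficient of $\alpha^{-m}\beta^{-n}$ is $Z_{m,n-1}-Z_{m-1,n}$, and evaluate it for $m,n\geq 1$ via the recursion and the boundary data \eqref{ini-Z}, using $Z_{m-1,0}Z_{0,n-1}=-G_{m}U_{n}$. Your primary argument takes a slightly different route through the same lemma: it bypasses the recursion entirely and resums the closed form \eqref{eq:Z} directly, decoupling the summation indices so that $G(\alpha)$ factors out on the left and recognizing the Cauchy kernel $\sum_{i+l=p-1}\alpha^{-i-1}\beta^{-l-1}=\frac{\beta^{-p}-\alpha^{-p}}{\alpha-\beta}$, after which everything collapses via $\sum_{p\geq 1}U_{p}\lambda^{-p}=G(\lambda)^{-1}-I$. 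Both routes are short formal computations resting on Lemma \ref{m-aff-1}; what your write-up adds over the paper's one-liner is (i) an explicit statement of how the left-hand side of \eqref{G-Z} is to be read as an element of $\mathfrak{gl}(2,\C)[[\alpha^{-1},\beta^{-1}]]$ (division by $\alpha-\beta$ interpreted through the polynomial expansions of the kernels, which makes the identity unambiguous), and (ii) the boundary cases $m=0$, $n=0$, $(m,n)=(0,0)$ checked against \eqref{ini-Z}, $U_{1}=-G_{1}$ and $G_{0}=I$ --- precisely the bookkeeping the paper leaves to the reader.
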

\begin{proof} Follows from the recursion \eqref{eq:Z_recursion}. \end{proof}

\noindent{\bf Symmetry.} In the case that $\det G(\lambda) \equiv 1,$ we have
\bea
&&G(\lambda)^{-1} = \sigma_2 \cdot G(\lambda)^{T}\cdot \sigma_{2},\\
&&U_{k}= \sigma_2 \cdot G_k^{T} \cdot \sigma_2,\qquad \forall\,k\geq 0.
\eea
where \be \sigma_2=\begin{pmatrix}
0 & -\sqrt{-1}\\
\sqrt{-1} & 0
\end{pmatrix}.
\ee
\begin{lemma}\label{m-symm}
If $\det{G(\lambda)}\equiv1,$ the matrix-valued affine coordinates $Z_{k,l}$ have the following expressions
\be
Z_{k,l} = -\sum_{j=0}^{k}G_{j}\cdot\sigma_2 \cdot (G_{k+l+1-j})^{T}\cdot \sigma_2;
\ee
Moreover, they possess the symmetry
\be
\label{eq:Z_symmetry}
Z_{l,k} = -\sigma_2 \cdot Z_{k,l}^{T} \cdot \sigma_2.
\ee
\end{lemma}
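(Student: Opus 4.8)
The statement has two parts, and I would treat them in turn. For the explicit formula, the plan is simply to feed the hypothesis $\det G(\lambda)\equiv 1$ into Lemma \ref{m-aff-1}. That lemma already provides $Z_{k,l}=-\sum_{j=0}^{k}G_j U_{k+l+1-j}$, while the symmetry paragraph preceding the statement records that under $\det G\equiv 1$ one has $U_k=\sigma_2\,G_k^{T}\,\sigma_2$ for every $k\geq 0$. Substituting this relation term by term yields $Z_{k,l}=-\sum_{j=0}^{k}G_j\,\sigma_2\,(G_{k+l+1-j})^{T}\,\sigma_2$ with no further work, so the first assertion is purely a matter of substitution.

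For the symmetry $Z_{l,k}=-\sigma_2\,Z_{k,l}^{T}\,\sigma_2$, my plan is to avoid index bookkeeping and instead exploit the generating identity of Lemma \ref{m-aff-2}. I would apply the linear map $M\mapsto -\sigma_2 M^{T}\sigma_2$ to both sides of the kernel identity. Applied to the right-hand side $\sum_{k,l}Z_{k,l}\,\alpha^{-k-1}\beta^{-l-1}$ this produces $\sum_{k,l}(-\sigma_2 Z_{k,l}^{T}\sigma_2)\,\alpha^{-k-1}\beta^{-l-1}$, so it suffices to identify the image of the left-hand kernel $(I-G(\alpha)G(\beta)^{-1})/(\alpha-\beta)$ and match coefficients of $\alpha^{-k-1}\beta^{-l-1}$.

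The key computation relies on three elementary facts: $\sigma_2^2=I$, $\sigma_2^{T}=-\sigma_2$, and the hypothesis $\sigma_2\,G(\lambda)^{T}\,\sigma_2=G(\lambda)^{-1}$, which also gives $\sigma_2\,(G(\lambda)^{-1})^{T}\,\sigma_2=G(\lambda)$. Using $(AB)^{T}=B^{T}A^{T}$ and inserting $\sigma_2^2=I$ between the two resulting factors, I would verify that $-\sigma_2\big(I-G(\alpha)G(\beta)^{-1}\big)^{T}\sigma_2=-I+G(\beta)G(\alpha)^{-1}$; since transposition leaves the scalar denominator untouched, the map sends the whole kernel to $(I-G(\beta)G(\alpha)^{-1})/(\beta-\alpha)$, which is precisely the original kernel with $\alpha$ and $\beta$ interchanged, i.e.\ $\sum_{k,l}Z_{l,k}\,\alpha^{-k-1}\beta^{-l-1}$ after relabeling. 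Comparing coefficients then gives $-\sigma_2 Z_{k,l}^{T}\sigma_2=Z_{l,k}$, as desired.

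The only delicate point is this transposition step: one must correctly track the two sign flips from $\sigma_2^{T}=-\sigma_2$ (which cancel), the order reversal in $(G(\alpha)G(\beta)^{-1})^{T}=(G(\beta)^{-1})^{T}G(\alpha)^{T}$, and the flip of the denominator from $\alpha-\beta$ to $\beta-\alpha$ that reproduces the swapped kernel. As an independent consistency check, the same symmetry follows directly from the explicit formula of the first part together with the observation that $G(\lambda)\,\sigma_2\,G(\lambda)^{T}\,\sigma_2=G(\lambda)G(\lambda)^{-1}=I$: the vanishing of its positive-power coefficients splits the convolution $\sum_{b=0}^{k+l+1}G_{k+l+1-b}\,\sigma_2\,G_b^{T}\,\sigma_2=0$ into exactly the two pieces representing $-\sigma_2 Z_{k,l}^{T}\sigma_2$ and $Z_{l,k}$. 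I would nonetheless present the generating-function argument as the main route, since it sidesteps the re-indexing entirely.
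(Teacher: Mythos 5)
Your proposal is correct, but your main argument takes a genuinely different route from the paper's. The paper proves \eqref{eq:Z_symmetry} directly from the sum formula of Lemma~\ref{m-aff-1}: it writes $Z_{l,k}+\sigma_2 Z_{k,l}^{T}\sigma_2$ as two finite sums, uses $U_m=\sigma_2 G_m^{T}\sigma_2$ (hence $\sigma_2 U_m^{T}\sigma_2=G_m$) to turn the second sum into $\sum_{i=0}^{k}G_{k+l+1-i}U_i$, and observes that the two sums re-index into the full convolution $\sum_{j=0}^{k+l+1}G_jU_{k+l+1-j}$, which vanishes as the coefficient of $\lambda^{-(k+l+1)}$ in $G(\lambda)G(\lambda)^{-1}=I$. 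This is exactly the argument you relegate to your final ``consistency check'', so you have in fact reproduced the paper's proof as a secondary route (incidentally, the paper's displayed computation contains a typo, $G_{l+k+1-k}$ where $G_{l+k+1-i}$ is meant, but the idea is as just described). Your primary route is different: you apply the linear involution $M\mapsto-\sigma_2 M^{T}\sigma_2$ to the kernel identity of Lemma~\ref{m-aff-2}, and your computation that the kernel $(I-G(\alpha)G(\beta)^{-1})/(\alpha-\beta)$ is mapped to the same kernel with $\alpha$ and $\beta$ exchanged is correct, using only $\sigma_2^2=I$, $\sigma_2\,G(\lambda)^{T}\sigma_2=G(\lambda)^{-1}$ and $\sigma_2\,(G(\lambda)^{-1})^{T}\sigma_2=G(\lambda)$; comparing coefficients is legitimate here, since Lemma~\ref{m-aff-2} already identifies the kernel with the formal series $\sum_{k,l}Z_{k,l}\,\alpha^{-k-1}\beta^{-l-1}$ and the involution acts termwise on that series, so \eqref{eq:Z_symmetry} follows. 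The trade-off: your generating-function route eliminates all index bookkeeping and exhibits the symmetry conceptually as compatibility of the involution with the swap $\alpha\leftrightarrow\beta$, while the paper's route is more elementary and self-contained, relying only on Lemma~\ref{m-aff-1} and the coefficients of $G(\lambda)G(\lambda)^{-1}=I$, without invoking Lemma~\ref{m-aff-2} (whose own proof the paper only sketches). Your treatment of the first displayed formula, namely substituting $U_k=\sigma_2G_k^{T}\sigma_2$ into \eqref{eq:Z}, coincides with what the paper leaves implicit.
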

\begin{proof}
The symmetry \eqref{eq:Z_symmetry} is justified by
\begin{align}
Z_{l,k}+\sigma_2 Z_{k,l}^{T}\sigma_2^{-1}&=-\left(\sum_{j=0}^{l}G_{j}U_{k+l+1-j} +\sum_{i=0}^{k}\sigma_2U_{l+k+1-i}^{T}\sigma_2^{-1}\sigma_2G_{i}^{T}\sigma_2^{-1}\right)\nn\\
&=-\left(\sum_{j=0}^{l}G_{j}U_{k+l+1-j} +\sum_{i=0}^{k}G_{l+k+1-k}U_{i}\right) =0.
\end{align}
\end{proof}

\section{Sato's tau function in terms of affine coordinates}
\label{sec:tau}
In this section we will give the definition of the tau function corresponding to an arbitrary point $W$ of the Sato Grassmannian for the KdV hierarchy.

Let us recall the original approach of Kontsevich \cite{Kontsevich}, which was explained and clarified in  \cite{Itzykson_Zuber}. As in the previous section, let $a(\lambda),b(\lambda)$ be the formal power series such that
\be
W = \Span_\mathbb{C}\left\{\lambda^{2k}a(\lambda), \lambda^{2k+1}b(\lambda)\right\}_{k=0}^{\infty}.
\ee
Denote
\begin{align}
f_{2k}(\lambda) &= \lambda^{2k}a(\lambda),\\
f_{2k+1}(\lambda) &= \lambda^{2k+1}b(\lambda),
\end{align}
and define, for $N\geq 1$,
\be
\label{eq:tau_N_initial}
\tilde\tau_{W,N}\left(\bx\right):= \frac{\displaystyle\det_{1\leq i,j\leq N}(f_j(x_i^{-1}))}{\displaystyle\det_{1\leq i,j\leq N}(x_i^{1-j})}, \qquad \bx=(x_1,...,x_N).
\ee

Recall that a partition $\mu=\left(\mu_1,\mu_2,...\right)$ is a sequence of weakly decreasing non-negative integers with $\mu_k=0$
 for sufficiently large $k$.
 The length $\ell(\mu)$ is the number of the non-zero parts of $\mu$ and the weight $|\mu|:=\mu_1+\mu_2+...$. The
Schur polynomial $s_{\mu}(\btheta)$ associated to $\mu$ is a polynomial in the variables
\be
\btheta: = (\theta_1,\theta_2,\dots),
\ee
defined as
\be
s_{\mu}(\btheta) = \det_{1\leq i,j \leq \ell(\lambda)}(h_{\lambda_i-i+j}(\btheta)),
\ee
where the polynomials $h_k(\btheta)$ are defined by the generating function
\be
\sum_{k=0}^{\infty}h_k(\btheta)z^n =e^{\sum_{j=1}^{\infty}\theta_j z^j}. 
\ee
If, for some $\bx=(x_1,...,x_N)$, $\btheta$ is of the special form
\be
\theta_k(\bx) = \frac{1}{k}\sum_{j=1}^{N}x_j^{k},\quad k\geq 1,
\ee
then the components of $\btheta=\btheta(\bx)$ are called \emph{Miwa variables}. The Schur polynomial $s_{\mu}$ is expressible in terms of $\bx$ as
\be
s_{\mu}(\btheta(\bx)) = \frac{\displaystyle\det_{1\leq i,j\leq N}(x_i^{l_j})}{\displaystyle\det_{1\leq i,j\leq N}(x_i^{N-j})},
\ee
where
\be
l_j: = \mu_j-j+N, \qquad 1 \leq j\leq N.
\ee
By the formal series version of the Cauchy--Binet identity \cite{Gantmacher}, we have
\begin{align}
\label{eq:Schur_exp_non_sym_1}
\tilde\tau_{W,N}\left({\mathbf x}\right) &= \sum_{0\leq l_{N} < l_{N-1}<\cdots < l_{1}}
\displaystyle\det_{1\leq i,j\leq N}\left(F^{(N)}_{l_i,j}\right)\frac{\displaystyle\det_{1\leq i,j\leq N}(x_i^{l_j})}{\displaystyle\det_{1\leq i,j\leq N}(x_i^{N-j})}\\
\label{eq:Schur_exp_non_sym_2}
&=\sum_{\substack{\mu\in {\mathbb Y}\\\ell(\mu)\leq N}}\displaystyle\det_{1\leq i,j\leq N}\left(F^{(N)}_{\mu_i-i+N,j}\right)s_{\mu}\left(\btheta(\bx)\right),
\end{align}
where $F^{(N)}$ is the $\infty \times N$ matrix
\be
\begin{tikzpicture}[scale=0.5]
\node at (-0.5,0) {$F^{(N)}=$};
\draw (1,-4) rectangle (5,0);
\draw (1,0) rectangle (5,5);
\node at (1.5,-0.5) {$a_0$};
\node at (1.5,0.5) {$a_1$};
\node at (1.5,1.5) {$a_2$};
\node at (1.5,3) {$\vdots$};
\node at (2.5,-1.5) {$b_0$};
\node at (2.5,-0.5) {$b_1$};
\node at (2.5,0.5) {$b_2$};
\node at (2.5,1.5) {$b_3$};
\node at (2.5,3) {$\vdots$};
\node at (3.5,-2.5) {$a_0$};
\node at (3.5,-1.5) {$a_1$};
\node at (3.5,-0.5) {$a_2$};
\node at (3.5,0.5) {$a_3$};
\node at (3.5,1.5) {$a_4$};
\node at (3.5,3) {$\vdots$};
\node at (4.5,3) {$\iddots$};
\node at (4.5,1.5) {$\dots$};
\node at (4.5,0.5) {$\dots$};
\node at (4.5,-0.5) {$\dots$};
\node at (4.5,-1.5) {$\dots$};
\node at (4.5,-2.5) {$\dots$};
\node at (4.5,-3.5) {$\ddots$};
\draw[snake=brace,below=1pt] (5,-4) -- node[below]{$N$} (1,-4);
\draw[snake=brace,right=1pt] (5,0) -- node[right]{$N$} (5,-4);
\node at (6,-0.5) {.};
\end{tikzpicture}
\ee
Eqs.~\eqref{eq:Schur_exp_non_sym_1} and \eqref{eq:Schur_exp_non_sym_2} were obtained in \cite{Itzykson_Zuber} and similar formulae also appeared in \cite{Cafasso} by using block Toeplitz determinants.

Note that
\be
\det_{1\leq i,j\leq N}\left(F^{(N)}_{\mu_i-i+N,j}\right)
\ee
is the \emph{Pl\"ucker coordinate} of the (infinite) Grassmannian frame matrix $F^{(N)}$ corresponding to a partition $\mu$ with $\ell(\mu)\leq N$. Therefore, according to Sato \cite{Sato}, the function
\be
\label{eq:tau_N_general}
\tau_{W,N}(\btheta):=\sum_{\substack{\mu\in {\mathbb Y}\\\ell(\mu)\leq N}}\displaystyle\det_{1\leq i,j\leq N}\left(F^{(N)}_{\mu_i-i+N,j}\right)s_{\mu}\left(\btheta\right)
\ee
is a tau function of the KP hierarchy with the independent variables $\btheta$. Also, it is easy to verify that $\tilde\tau_{W,N}(\bx) = \tau_{W,N}(\btheta(\bx))$.

Introduce a gradation for the formal power series ring $\mathbb{C}[[\btheta]]$ by using the degree assignments 
\be\deg \theta_k:=k,\quad k\geq 1.\ee
Then we have
\be
\tau_{W,N}(\btheta) = \sum_{k=0}^{\infty}\tau_{W,N}^{(k)}(\btheta)
\ee
where $\tau_{W,N}^{(0)}(\btheta)=1$ and $\tau_{W,N}^{(k)}(\btheta)$ for $k\geq 1$ is a graded homogeneous polynomial of $\btheta$ of degree $k$ which admits the form
\be
\tau_{W,N}^{(k)}(\btheta)=\sum_{\substack{\mu\in {\mathbb Y}\\\ell(\mu)\leq N,\ |\mu|=k}}\displaystyle\det_{1\leq i,j\leq N}\left(F^{(N)}_{\mu_i-i+N,j}\right)s_{\mu}\left(\btheta\right).
\ee
It is easy to see that $\tau_{W,N}^{(k)}(\btheta)$ depends only on the variables $\theta_1,...,\theta_k$ and that
\be
\tau_{W,N}^{(k)}(\btheta)=\tau_{W,k}^{(k)}(\btheta) \qquad \text{for } N\geq k.
\ee
Similarly as in \cite{Itzykson_Zuber}, this allows to define without ambiguity the formal power series
\be
\tau_{W}(\btheta) = \sum_{k=0}^{\infty}\tau_{W,k}^{(k)}(\btheta).
\ee
This formal series $\tau_{W}(\btheta)$ is Sato's tau function of the KP hierarchy corresponding to the subspace $W$. Moreover, the constraint $\lambda^2W\subset W$ implies that the flows corresponding to the even variables $\theta_{2j}$ are trivial, i.e., $\tau_{W}(\btheta)$ is a tau function of the KdV hierarchy.

By employing elementary column operations on $F^{(N)}$ in equation \eqref{eq:tau_N_general} we obtain
\be\label{eq:Giambelli}
\det_{1\leq i,j\leq N}\left(F^{(N)}_{\mu_i-i+N,j}\right) = (-1)^{n_1+\dots+n_k}\det_{1\leq i,j\leq k}\left(A_{m_i,n_j}\right):=A_{\mu} 
\ee
where the partition $\mu$ is expressed in terms of its Frobenius characteristics \cite{Macdonald} $\mu=(m_1,\dots, m_k\, |\,n_1,\dots,n_k)$.
Equation \eqref{eq:Giambelli} will be referred to as a \emph{Giambelli-type formula}\footnote{The standard Giambelli formula \cite{Macdonald} says $$s_{(m_1,\dots, m_k |n_1,\dots,n_k)}(\btheta) = \det_{1\leq i,j\leq k}(s_{(m_i|n_j)}(\btheta)).$$}, as in \cite{HE}. We arrive at

\begin{proposition} \label{aff-tau}
Given an arbitrary point $W$ of the Sato Grassmannian such that $\lambda^2 W\subset W$, the formal series defined by
\be \label{tau-kdv}  \tau_{W}(\btheta)=\sum_{\mu\in\mathbb{Y}} A_\mu s_\mu({\boldsymbol\theta}) \ee
is the tau function of the KdV hierarchy \eqref{KdV-hie1},\eqref{KdV-hie2} corresponding to $W$. Here $A_\mu$ is defined in \eqref{eq:Giambelli} with the affine coordinates $A_{m_i,n_j}$ defined in \eqref{def-affine}, $s_\mu(\btheta)$ is the Schur polynomial in $\btheta$ associated to a partition $\mu,$ and 
\be\label{normalisation-choice}
t_k = -(2k+1)!!\,\theta_{2k+1},\qquad k\geq 0.
\ee
\end{proposition}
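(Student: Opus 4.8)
The plan is to assemble the ingredients already established in this section rather than to compute anything new. By the Cauchy--Binet expansion \eqref{eq:Schur_exp_non_sym_2} together with Sato's interpretation of its coefficients as Pl\"ucker coordinates of the frame $F^{(N)}$, the truncation $\tau_{W,N}(\btheta)$ is a KP tau function whose graded components stabilize, so that $\tau_W(\btheta)=\sum_{d\geq 0}\tau_{W,d}^{(d)}(\btheta)$ is well defined. The first and essentially only combinatorial task is to recognize that this stabilized series coincides with $\sum_{\mu\in\mathbb{Y}}A_\mu s_\mu(\btheta)$.

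First I would fix a degree $d$ and note that only the finitely many partitions $\mu$ with $|\mu|=d$ contribute to $\tau_{W,d}^{(d)}$, each with coefficient equal to the Pl\"ucker coordinate $\det_{1\leq i,j\leq N}(F^{(N)}_{\mu_i-i+N,j})$. The Giambelli-type formula \eqref{eq:Giambelli} rewrites this coordinate, in terms of the Frobenius data $\mu=(m_1,\dots,m_k\,|\,n_1,\dots,n_k)$, as $A_\mu=(-1)^{n_1+\dots+n_k}\det_{1\leq i,j\leq k}(A_{m_i,n_j})$, which depends only on $W$ and not on the truncation level $N$ (for $N\geq\ell(\mu)$), consistently with the stabilization. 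Summing over all degrees $d$ yields \eqref{tau-kdv}, and the grading guarantees well-definedness since each monomial in $\btheta$ receives contributions from only finitely many $\mu$.

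Next I would record the reduction to KdV. Since $\lambda^2 W\subset W$, the even KP flows act trivially on $\tau_W$, as already noted above, so $\tau_W$ depends only on the odd times $\theta_{2k+1}$ and is a tau function of the KdV hierarchy in Sato's normalization. It then remains to identify this with the specific presentation \eqref{KdV-hie1}--\eqref{KdV-hie2}: setting $u=\partial_{t_0}^2\log\tau_W$, one verifies that under the change of variables \eqref{normalisation-choice} the Sato flow $\partial/\partial\theta_{2k+1}$ is proportional to the physical flow $\partial/\partial t_k$ generated by the Lenard--Magri recursion operator $\mathcal{R}$.

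The main obstacle is pinning down the precise normalization constant $t_k=-(2k+1)!!\,\theta_{2k+1}$ rather than the partition sum, which is bookkeeping once \eqref{eq:Giambelli} and the stabilization are in hand. Here the double factorial $(2k+1)!!$ enters through the standard normalization of the resolvent expansion of the KdV Lax operator, i.e. the matching of the $\lambda^{2k+1}$-grading of $W$ against the Gelfand--Dickey flows, while the overall sign is fixed by comparing the leading flow $u_{t_1}=uu_x+\tfrac{1}{12}u_{xxx}$ in \eqref{KdV-hie1} with its Sato counterpart. I would verify the proportionality explicitly on the first one or two flows and then propagate it to all $k$ by the standard KP-to-KdV reduction.
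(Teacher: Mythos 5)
Your proposal is correct and follows essentially the same route as the paper: Cauchy--Binet expansion \eqref{eq:Schur_exp_non_sym_2} with Pl\"ucker-coordinate coefficients, Sato's theorem, stabilization of the graded components, the Giambelli-type identity \eqref{eq:Giambelli} to identify the coefficients with $A_\mu$, and triviality of the even flows from $\lambda^2 W\subset W$. The paper's ``proof'' is precisely this assembly of the preceding discussion (it likewise leaves the normalization constant \eqref{normalisation-choice} as a standard matching of Sato times to the Lenard--Magri flows, which you fill in slightly more explicitly).
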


\begin{example} As a simple example, consider
\be
G(\lambda) = \begin{pmatrix}1& c\,\lambda^{-1}\\ 0& 1\end{pmatrix}.
\ee
This gives the affine coordinates
\be
A_{m,n} = c\,\delta_{m,1}\delta_{n,1},
\ee
and therefore
\be
\tau_{W}(\btheta)= 1 - c\,s_{(2,1)}(\btheta)=1- c\left(\frac{\theta_1^3}{3}-\theta_3\right)=1+c\frac{t_0^3}{3}-c\frac{ t_1}{3}.
\ee
The corresponding solution to the KdV hierarchy \eqref{KdV-hie1},\eqref{KdV-hie2} is given by
\be
u(\bt) = \,\partial^2_{t_0}\log\tau_{W}(\btheta)=-\frac{3\, c\, t_0 \left(c \left(t_0^3+2\, t_1\right)-6\right)}{\left(c \left(t_0^3-t_1\right)+3\right)^2}.
\ee
It satisfies the following initial condition
\be
u|_{t_{\geq 1}=0}=-\frac{3\, c\, t_0 \left(c\,t_0^3-6\right)}{\left(c\,t_0^3+3\right)^2}=2 c\, t_0-\frac{5 c^2 t_0^4}{3}+\frac{8 c^3 t_0^7}{9}-\frac{11}{27} c^4 t_0^{10}+o(t_0^{10}),\quad t_0\rightarrow 0.
\ee
We point out the following interesting property of this example: The function $$-\frac{3\, c\, t_0 \left(c\,t_0^3-6\right)}{\left(c\,t_0^3+3\right)^2}$$ is a common solution to all the higher order ($\text{order}\geq 2$) stationary flows of the KdV hierarchy.
\end{example}

\section{Application to the Witten--Kontsevich tau function}
\label{sec:WK}
As we have already mentioned in the introduction, the point of the Sato Grassmannian for the Witten-Kontsevich tau function $Z(\bt)$ is given by
\be
W^{WK} = \Span_\mathbb{C}\left\{\lambda^{2k}c(\lambda), \lambda^{2k+1}q(\lambda)\right\}_{k=0}^{\infty}.
\ee
which satisfies
\be \lambda^2\,W^{WK}\subset\,W^{WK}.\ee
Here $c(\lambda)$ and $q(\lambda)$ are defined in \eqref{def-c} and \eqref{def-q}, respectively.

By Prop. \ref{aff-tau}
we have 
\be
Z(\bt)=\sum_{\mu\in\mathbb{Y}} A_\mu s_\mu(\btheta).
\ee
Here $A_\mu$ is given by the Giambelli-type formula \eqref{eq:Giambelli} with $A_{i,j}$ the affine coordinates associated to the Witten--Kontsevich subspace $W^{WK},$
\be
t_k = -(2k+1)!!\,\theta_{2k+1}, \qquad k\geq 0,
\ee
and $s_\mu(\btheta)$ is the Schur polynomial in $\btheta$ associated to the partition $\mu.$
 
In the rest of this section, we derive the explicit expressions of $A_{m,n}$ for $Z(\bt)$ and prove Theorem \eqref{main}. The loop group element
associated to the Witten--Kontsevich subspace $W^{WK}$ is
\be
G(\lambda) =  \frac{1}{2}\lambda^{-\frac{1}{4}\sigma_3}
\begin{pmatrix}
1 & 1\\
1 & -1
\end{pmatrix}
\begin{pmatrix}
c(\sqrt{\lambda}) & q(\sqrt{\lambda})\\
c(-\sqrt{\lambda}) & q(-\sqrt{\lambda})
\end{pmatrix}
\lambda^{\frac{1}{4}\sigma_3}.
\ee
Recall that $c(\lambda),q(\lambda)$ satisfy the following identity \cite{Itzykson_Zuber}:
\begin{align}
c(\lambda)q(-\lambda)+c(-\lambda) q(\lambda) \equiv 2,
\end{align}
and hence $\det(G(\lambda)) =1$. Moreover, since both $c(\lambda)$ and $q(\lambda)$ are power
series in $\lambda^{-3}$, the following identity holds:
\be
G\left(\omega^2 \lambda\right) = \omega^{\sigma_3}\,G(\lambda)\, \omega^{-\sigma_3}
\ee
where $\omega:=e^{\pi i/3}$. We have
\begin{align}
G_{3j} =\begin{pmatrix}
c_{2j} & 0\\
0 & q_{2j}
\end{pmatrix}, &\qquad
U_{3j} =\begin{pmatrix}
q_{2j} & 0\\
0 & c_{2j}
\end{pmatrix},\\
G_{3j+1} =\begin{pmatrix}
0 & q_{2j+1}\\
0 & 0
\end{pmatrix}, &\qquad
U_{3j+1} =\begin{pmatrix}
0 & -q_{2j+1}\\
0 & 0
\end{pmatrix},\\
G_{3j+2} =\begin{pmatrix}
0 & 0\\
c_{2j+1} & 0
\end{pmatrix}, &\qquad
U_{3j+2} =\begin{pmatrix}
0 & 0\\
-c_{2j+1}& 0
\end{pmatrix}.
\end{align}

Substituting the above expressions of $G_k,\,U_k$ into equation \eqref{eq:Z} we
can find the expressions for $Z_{k,l}$. 
For example,
\begin{align}
Z_{3k,3l}&= -\sum_{j=0}^{3k}G_{j}U_{3k+3l+1-j}\nn\\
&= -\sum_{j_1=0}^{k}G_{3j_1}U_{3k+3l+1-3j_1} - \sum_{j_2=0}^{k-1}G_{3j_2+1}U_{3k+3l-3j_2}\nn\\
&= 
\begin{pmatrix}
0 & \sum_{j_1=0}^{k}c_{2j_1}q_{2k+2l-2j_1+1} -\sum_{j_2=0}^{k-1}q_{2j_2+1}c_{2k+2l-2j_2}\\
0 & 0
\end{pmatrix}.
\end{align}

Now let us identify the affine coordinates $A_{m,n}$ with the numbers $A^{Zhou}_{m,n}$ derived by Zhou \cite{Zhou}. To do so we rescale $A^{Zhou}_{m,n}$ as in the introduction
\be B_{m,n}=(\sqrt{-2})^{m+n+1}A^{Zhou}_{m,n},\quad m,n\geq 0.\ee
\begin{proposition} \label{id-Zhou} For all $m,n\geq 0,$
\be A_{m,n}=B_{m,n}. \ee
\end{proposition}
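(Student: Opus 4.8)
The plan is to compute the affine coordinates $A_{m,n}$ directly from the structural formula of Lemma \ref{m-aff-1} and then match the result, entry by entry, against Zhou's rescaled numbers $B_{m,n}=(\sqrt{-2})^{m+n+1}A_{m,n}^{Zhou}$. First I would record the selection rule. Since $c(\lambda)$ and $q(\lambda)$ are power series in $\lambda^{-3}$, the blocks $G_k$ and $U_k$ are nonzero only in the three periodic patterns listed before the Proposition; equivalently $G$ obeys the covariance $G(\omega^2\lambda)=\omega^{\sigma_3}G(\lambda)\omega^{-\sigma_3}$ with $\omega=e^{\pi i/3}$. Feeding this into $Z_{k,l}=-\sum_{j=0}^{k}G_jU_{k+l+1-j}$ forces every matrix entry of $Z_{k,l}$, hence every $A_{m,n}$, to vanish unless $(m\bmod 3,\,n\bmod 3)\in\{(2,0),\,(0,2),\,(1,1)\}$. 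These are exactly the three index families $A_{3m-1,3n}$, $A_{3m-3,3n+2}$, $A_{3m-2,3n+1}$ appearing in Zhou's formulas, so the matching splits into three cases.

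For each admissible family I would extract the corresponding matrix entry of $Z_{k,l}=-\sum_j G_jU_{k+l+1-j}$. Because only every third block is nonzero, and each nonzero block is either diagonal (for index $3j$) or strictly triangular (for $3j+1$, $3j+2$), each entry collapses to a single finite sum of the form $\sum_a c_{2a}q_{2b+1}$, $\sum_a q_{2a+1}c_{2b}$ or $\sum_a c_{2a}q_{2b}$, exactly as in the displayed computation of $Z_{3k,3l}$. Substituting the closed forms $c_k=\frac{(-1)^k}{288^k}\frac{(6k)!}{(3k)!(2k)!}$ and $q_k=\frac{1+6k}{1-6k}c_k$ turns each $A_{m,n}$ into an explicit terminating hypergeometric-type sum, to be compared with $B_{m,n}$.

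The crux — and the step I expect to be the main obstacle — is evaluating these sums in closed form and recognizing the degree-$(n-1)$ polynomial $B_n(m)$, the constants $b_n$, and the rational corrections $\frac{b_n}{6m\pm1}$ appearing in Zhou's expressions. Rather than summing everything by brute force, I would exploit the recursive machinery already in place: the affine coordinates satisfy the two-step recursion of Theorem \ref{main}(i) (a consequence of Lemma \ref{m-aff-1}), and by Lemma \ref{unique-Z} they are completely fixed by that recursion together with the boundary data $Z_{k,0}=G_{k+1}$ and $Z_{0,l}=-U_{l+1}$, whose entries are literally $\pm c_k$ and $\pm q_k$. It therefore suffices to check (a) that $B_{m,n}$ reproduces these boundary values — the $n=0$ and $m=1$ specializations of Zhou's formulas, where $B_n(m)$ and the various products degenerate — and (b) that the $B_{m,n}$ satisfy the same bilinear recursion $B_{m+2,n}-B_{m,n+2}=B_{m,0}B_{1,n}+B_{m,1}B_{0,n}$. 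Step (b) is a factorial/Pochhammer identity among Zhou's closed forms; it is the genuinely hard part, but it is structurally simpler than evaluating each sum from scratch, and once (a) and (b) hold the uniqueness in Lemma \ref{unique-Z} forces $A_{m,n}=B_{m,n}$ for all $m,n\ge0$.
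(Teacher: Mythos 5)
Your proposal is correct and follows essentially the same route as the paper's proof: invoke Lemma \ref{unique-Z} to reduce the identification to (a) matching the boundary data $Z_{k,0}=G_{k+1}$, $Z_{0,l}=-U_{l+1}$ against the degenerate cases of Zhou's formulas and (b) verifying that the $B_{m,n}$ satisfy the bilinear recursion, which the paper likewise reduces (using the mod-3 vanishing pattern, the coincidence $B_{3m-1,3n}=B_{3m-3,3n+2}$, and a recursion for the polynomials $B_n(x)$) to a single combinatorial identity. The only cosmetic difference is that the paper also exploits the symmetry $B_{n,m}=(-1)^{m+n}B_{m,n}$, $A_{n,m}=(-1)^{m+n}A_{m,n}$ to deduce one half of the boundary check from the other.
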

\begin{proof} Due to Lemma \ref{unique-Z} it suffices to show that
\begin{align}
&B_{k+2,l}-B_{k,2+l} = B_{k,0}B_{1,l}+B_{k,1}B_{0,l},\quad \forall\,k,l\geq 0,\label{to-show1}\\
&B_{k,0}=A_{k,0},\quad B_{k,1}=A_{k,1},\quad \forall\,k\geq 0,\label{to-show2}\\
&B_{0,k}=A_{0,k},\quad B_{1,k}=A_{1,k}.\quad \forall\,k\geq 0.\label{to-show3}
\end{align}
Noting that $B_{m,n}=0$ unless $m+n\equiv-1\,(\text{mod} ~3)$ and that \be B_{3m-1,3n}=\,B_{3m-3,3n+2},\ee in order to show \eqref{to-show1} we only need to show
\be B_{3m-2,3n+1}-B_{3m,3n-1}=\,-B_{3m-2,1}B_{0,3n-1}.\ee
This is equivalent to the following
\begin{align}\label{combinatorics}
&m(2m+1)\left(B_n(m)+\frac{b_n}{6m-1}\right)\nn\\
&-(6m+7)(6m+5)(6m+3)\left(B_{n-1}(m+1)+\frac{b_{n-1}}{6m+7}\right)\nn\\
&\qquad =\frac{(6n-1)!!}{6m-1}\frac{(2m+2n)(2m+2)!}{(2m)!(2n)!}\frac{1}{(m+1)...(m+n-1)}.
\end{align}
Substituting the recursion relation 
\begin{multline}
B_n(x)=108(x+2)\cdot B_{n-1}(x+1)\\+105\frac{B_{n-1}(x+1)}{x}-\frac{18(n-1)b_{n-1}}{x}+18b_{n-1}
\end{multline}
into Eq.~\eqref{combinatorics} above, we find that \eqref{combinatorics} becomes a simple combinatorial identity.

Noticing that both $B_{m,n}$ and $A_{m,n}$ satisfy the following symmetries
\be B_{n,m}=(-1)^{m+n}B_{m,n},\qquad A_{n,m}=(-1)^{m+n}A_{m,n}, \ee
we find \eqref{to-show2} and \eqref{to-show3} are equivalent. It remains to show \eqref{to-show2}. And this is true because of \eqref{ini-Z}. The proposition is proved.
\end{proof}

It is clear that Prop. \ref{id-Zhou}, Lemma \ref{m-aff-1}, Lemma \ref{m-aff-2}, Lemma \ref{m-symm} together imply Theorem \ref{main}.
\section{Relation to the $R$-matrix of the $A_2$ Frobenius manifold}
\label{sec:A2_Frobenius}
The Frobenius manifold corresponding to moduli spaces of $3$-spin structures of type $A$ \cite{Witten2} is the space of miniversal deformations of a simple singularity of type $A_2.$
It is usually called the $A_2$ Frobenius manifold. The {\em potential} of this Frobenius manifold \cite{Dubrovin,PPZ} is given by
\be F=\frac{1}{2}(v^1)^2v^2+\frac{1}{72}(v^2)^4,\ee
with the invariant metric
\be\eta=\begin{pmatrix}
0 & 1\\
1 & 0\\ \end{pmatrix}.
\ee
Here $v=(v^1,v^2)$ is a flat coordinate system at certain semisimple point of the Frobenius manifold. The spectral data $(\mu,\rho)$ \cite{Dubrovin} for this Frobenius manifold reads as follows
\be\mu_1=-1/6,\,\mu_2=1/6,\,\rho=0.\ee

The $R$-matrix of a Frobenius manifold was introduced in \cite{Dubrovin,Givental}.
According to an explicit expression of $R(z)$ given in \cite{PPZ}, we have
\be
R(z)=\begin{pmatrix}
\sum_k q_{2k}z^{2k} & -\sum_k q_{2k+1} z^{2k+1}\\
-\sum_k c_{2k+1}z^{2k+1} & \sum_k c_{2k} z^{2k}\\
\end{pmatrix}.
\ee
As before $\sum_k:=\sum_{k=0}^\infty.$ Note that here we have evaluated the $R$-matrix $R(v;z)$ of the $A_2$ Frobenius manifold at a particular point of this manifold, more precisely at $\phi = 6^{-2/3}$ in the notation of \cite{PPZ}.

\begin{prf} of Theorem \ref{two-affine}.\quad It is straightforward to verify that
\be \label{relation-RG}
R(z) = z^{\frac{\sigma_3}{6}}G^{-1}\left(z^{-\frac{2}{3}}\right)z^{-\frac{\sigma_3}{6}}.
\ee
Recall that the $V$-matrices $V_{k,l}$ associated to $R(z)$ are defined by
\be \label{V-def}
\frac{R^*(w)R(z)-I}{w+z}=\sum_{k,l\geq 0} (-1)^{k+l}\, V_{k,l}\, w^k z^l.
\ee
Here $R^{*}(w) = \eta \,R^{T}(w)\,\eta$. These matrices satisfy that $\forall\,k,l\geq 0,$
\bea
V_{k,l+1}+V_{k+1,l}&=&V_{k,0}\,V_{0,l},\\
V_{k,l}^*&=&V_{l,k}.
\eea  
Proof of \eqref{a1},\eqref{a2} is then straightforward by using \eqref{relation-RG}. \end{prf}

\section{Conclusion}
\label{sec:conclusion}
Ityzkson and Zuber \cite{Itzykson_Zuber} also considered more general matrix integrals, which give the partition function for the $r$-spin structures of type $A,$ for any $r\geq 2$. See also Kac-Schwarz \cite{KS} by using the Grassmannian approach. Note that our formulation for matrix-valued affine coordinates and tau functions \eqref{tau-kdv} also work for any Gelfand-Dickey hierarchy. We will postpone explicit formulas for the $r$-spin partition function in a subsequent paper \cite{BY}.

Another interesting question is to investigate the relation between the Taylor coefficients $(s_0,s_1,s_2,...)$ of an analytic initial data $u_0(x)$ of the KdV hierarchy
\be
u_{t_{\geq 1}=0}=u_0(x)=\sum_{n=0}^\infty \frac{s_n}{n!}\,x^n
\ee
and the matrix-valued affine coordinates (or alternatively speaking the $G-$matrix \eqref{G-Z}). For example, our results imply that the $G-$matrix
\be
G(\lambda) = \begin{pmatrix}1& c\,\lambda^{-1}\\ 0& 1\end{pmatrix}
\ee
gives rise to the following choice of coefficients
\be s_0=0,\,s_1=2c,\,s_2=s_3=0,\,s_4=\frac53c^2,\,s_5=s_6=0,\,s_7=\frac89c^3,\,\dots,\ee
and the $G-$matrix 
\be
G(\lambda)=\left(\begin{array}{ll}
\sum_k c_{2k}\lambda^{-3k} &  \sum_k q_{2k+1} \lambda^{-3k-1}\\
\sum_k c_{2k+1}\lambda^{-3k-2} & \sum_k q_{2k} \lambda^{-3k}\\
\end{array}\right).
\ee
gives rise to the coefficients
\be
s_0=0,\,s_1=1,\,s_2=s_3=s_4=...=0.
\ee
Considering Theorem \ref{two-affine}, it would be interesting to investigate the initial data $u_0(x)$ corresponding to the $G$-matrix coming from the $R-$matrix associated to an arbitrary semisimple calibrated two dimensional Frobenius manifold. 
\subsection*{Acknowledgements}
We would like to thank Boris Dubrovin and Marco Bertola for many helpful discussions and encouragements. F. B. wishes to thank John Harnad for introducing him to the subject of tau functions. D. Y. wishes to thank Youjin Zhang for his advises and helpful discussions. The work is partially supported by PRIN 2010-11 Grant ``Geometric and analytic theory of Hamiltonian systems in finite and infinite dimensions" of the Italian Ministry of Universities and Researches, and by the Marie Curie IRSES project RIMMP.

\end{document}